\newtheorem{proposition}{Proposition}
\newtheorem{corollary}{Corollary}
\newtheorem{lemma}{Lemma}
\theoremstyle{definition}
\newtheorem{remark}{Remark}
\newtheorem{definition}{Definition}
\newcommand{\ff}{\mathfrak}
\newcommand{\C}{\mathcal}
\newcommand{\p}{\partial}
\newcommand{\F}[2]{\frac{#1}{#2}}
\renewcommand{\ker}{\mathrm{ker}}
\newcommand{\im}{\mathrm{im}}
\newcommand{\Hom}{\mathrm{Hom}}
\newcommand{\fa}{\mathfrak{A}}
\newcommand{\gh}{\mathrm{gh}}
\newcommand{\Deg}{\mathrm{Deg}}
\newcommand{\res}{\mathrm{res}}
\newcommand{\rank}{\mathrm{rank}}
\journal{Geometry and Physics}
\begin{document}

\begin{frontmatter}

%% Title, authors and addresses

%% use the tnoteref command within \title for footnotes;
%% use the tnotetext command for theassociated footnote;
%% use the fnref command within \author or \address for footnotes;
%% use the fntext command for theassociated footnote;
%% use the corref command within \author for corresponding author footnotes;
%% use the cortext command for theassociated footnote;
%% use the ead command for the email address,
%% and the form \ead[url] for the home page:
%% \title{Title\tnoteref{label1}}
%% \tnotetext[label1]{}
%% \author{Name\corref{cor1}\fnref{label2}}
%% \ead{email address}
%% \ead[url]{home page}
%% \fntext[label2]{}
%% \cortext[cor1]{}
%% \address{Address\fnref{label3}}
%% \fntext[label3]{}

\title{Lifting a Weak Poisson Bracket to the Algebra of Forms}

%% use optional labels to link authors explicitly to addresses:
%% \author[label1,label2]{}
%% \address[label1]{}
%% \address[label2]{}

\author[label1]{S.~Lyakhovich}
\ead{sll@phys.tsu.ru}
\author[label2]{M.~Peddie}
\ead{matthew.peddie@manchester.ac.uk}
\author[label1]{A.~Sharapov}
\ead{sharapov@tsu.ru}

\address[label1]{Department of Quantum Field Theory, Tomsk State University, Tomsk 634050, Russia.}
\address[label2]{School of Mathematics,  University of Manchester, Oxford Road,  Manchester   M13 9PL,  UK.}

\begin{abstract}
We detail the construction of a weak Poisson bracket over a submanifold $\Sigma$ of a smooth manifold $M$ with respect to a local foliation of this submanifold. Such a bracket satisfies a weak type Jacobi identity but may be viewed as a usual Poisson bracket on the space of leaves of the foliation. We then lift this weak Poisson bracket to a weak odd Poisson bracket on the odd tangent bundle $\Pi TM$, interpreted as a weak Koszul bracket on differential forms on $M$. This lift is achieved by encoding the weak Poisson structure into a homotopy Poisson structure on an extended manifold, and lifting the Hamiltonian function that generates this structure. Such a construction has direct physical interpretation. For a generic gauge system, the submanifold $\Sigma$ may be viewed as a stationary surface or a constraint surface, with the foliation given by the foliation of the gauge orbits. Through this interpretation, the lift of the weak Poisson structure is simply a lift of the action generating the corresponding BRST operator of the system.
\end{abstract}

\begin{keyword}
%% keywords here, in the form: keyword \sep keyword
Gauge system \sep weak Poisson and Koszul brackets \sep homotopy Poisson algebra \sep BRST theory \sep master equation.
%% PACS codes here, in the form: \PACS code \sep code

%% MSC codes here, in the form: \MSC code \sep code
\MSC[2010] 53D17 \sep 53Z05 \sep 58A50

\end{keyword}

\end{frontmatter}

%% \linenumbers

%% main text
\section{Introduction and Background}
When considering classical gauge systems, one usually starts with a Lagrangian or Hamiltonian function which may then be used to derive the equations of motion through the least action principle. In most cases this function is known, but in those that are not, it is known that the existence of a classical BRST differential allows one to identify the equations of motion and the gauge symmetries independently of the Lagrangian. Such a BRST differential is a homological vector field on an appropriately extended manifold which encodes the gauge system \cite{HT92}. In \cite{LS05}, a general geometric set-up for an arbitrary gauge system was described, and an embedding of such a system into an extended manifold was constructed. This is without reference to any Lagrangian or Hamiltonian function and merely assumes the existence of the equations of motion, together with the gauge symmetries present. This geometric construction introduced a weak Poisson bracket, a Poisson bracket that satisfies a weak Jacobi identity, which allows a quantisation of these gauge systems which may not be Lagrangian or Hamiltonian. This paper is about lifting this weak Poisson bracket to the algebra of differential forms by utilising the described embedding into the extended manifold.

For an arbitrary smooth manifold $M$ with local coordinates $(x^i)$, let the system of equations $T^a(x)=0$, $a=1,\ldots,k$, define a smooth submanifold $\Sigma$ of codimension $k$. Choose $n$ linearly independent vector fields $R_\alpha$ on $M$ such that they are tangent to $\Sigma$. (In fact, linear independence is not necessary but simplifies the exposition, see remark \ref{Remark reducibility in gt}.) For a corresponding gauge system in the Lagrangian formalism, the equations $T^a(x)=0$ may be identified with the equations of motion, possibly derived from an action principle, whilst the manifold $M$ is understood as the space of trajectories in a configuration space of the system. The vector fields $R_\alpha$ then generate the gauge symmetries of the action. If we consider the  Hamiltonian formalism, then the surface $\Sigma$ may be identified with the constraint surface given by the constraint equations $T^a(x)=0$. The vector fields correspond to the gauge generators which define a foliation of the submanifold $\Sigma$ into gauge orbits identified with the integral submanifolds.

To obtain this foliation, it is required for the vector fields to form an integrable distribution over $\Sigma$; they must satisfy the commutation relation
    \begin{equation}\label{vector field relation}
    [R_\alpha,R_\beta] = f^\gamma_{\alpha\beta} R_\gamma + T^a X_{a\alpha\beta},
    \end{equation}
for smooth functions $f^\gamma_{\alpha\beta}$ and vector fields $X_{a\alpha\beta}$ on $M$. Notice that the presence of the constraint terms $T^a$ means that this is an open Lie algebra over $M$ that closes only over $\Sigma$. The space of leaves $N$ of the foliation of $\Sigma$ generated by this integrable distribution gives the true physical degrees of freedom when viewed as a gauge system. Because of this, physically interesting objects are those that descend to the leaf space, i.e. those which are constant over the gauge orbits which are identified with the integral submanifolds. In the work \cite{LS05}, a function or multivector field was defined to be projectible precisely when it may be considered as a smooth contravariant tensor field on the leaf space $N$. In general, the space $N$ may not be smooth. When we say smooth in this sense, we refer to those smooth tensor fields $\C{T}$ on $M$ which are constant along the integral submanifolds:
    \[\C{L}_{R_\alpha}\C{T} = \C{T}^\alpha R_\alpha + T^a\C{T}_a,\]
for smooth tensor fields $\C{T}_a, \C{T}^\alpha$; here $\C{L}_{R_\alpha}\C{T}$ is the Lie derivative along the vector field $R_\alpha$. Therefore, we should consider projectible multivector fields as physically interesting. Indeed, examples include: vector fields $R_\alpha$ that correspond to the gauge generators, vector fields that introduce dynamics into the gauge system, and bivector fields which can induce Poisson structures in the algebra of functions of $N$. (In physical literature the algebra of functions on $N$, identified with projectible functions on $M$, is called the algebra of physical observables.) Such a Poisson structure on $N$, specified by an appropriate projectible bivector field on $M$, induces a weak Poisson bracket on $M$. This is a Poisson bracket on $M$ corresponding to the same projectible bivector field, but which satisfies only a weak Jacobi identity; a Jacobi identity that holds over $\Sigma$ up to terms proportional to the vector fields $R_\alpha$.

An embedding of this foliation together with a weak Poisson bracket was detailed, \cite{LS05}, into which the information was encoded into a single function $S$ on an extended manifold. This embedding corresponds to reformulating the theory in terms of the BRST language \cite{HT92}. The function $S$ is called the master function and corresponds to the action associated to the gauge system. Such a function contains all the information present in the gauge system, and generates the BRST operator, a homological vector field on the extended manifold. This extended manifold is the original manifold $M$ appropriately extended by ghost variables. The notion of projectibility was then entwined with the BRST operator of the theory, where it was shown that projectible multivector fields are cocycles of this BRST operator considered as a differential on the larger algebra of functions.

The BRST operator may be considered as a unary bracket in a homotopy Poisson structure (a $P_\infty$-structure), which is generated by the master function $S$ on the extended manifold. The weak Poisson bracket on $M$ embeds into this homotopy Poisson structure, and is recovered as the restriction of the binary bracket to $M$. In \cite{KV08} an explicit construction was given to canonically lift a homotopy Poisson structure to an odd homotopy Poisson structure (an $S_\infty$-structure) on the odd tangent bundle. By applying this construction, the even homotopy Poisson structure may be lifted to the odd tangent bundle of the extended manifold, to produce the corresponding odd homotopy structure. Through this we may define a weak Koszul bracket on the odd tangent bundle $\Pi TM$. This weak bracket is the restriction of the binary bracket in this odd homotopy Poisson structure, and corresponds to the weak Poisson bracket on $M$ in precisely the same way that the well-known Koszul bracket of forms corresponds to a Poisson structure.

The Koszul bracket is a natural odd extension of the usual even Poisson bracket. A natural even extension to the entire algebra of forms does not exist; however, in the works \cite{CV92, Mic85}, it was shown that a Poisson bracket induces a genuine even Poisson bracket in the space of co-exact forms - differential forms modulo the exact forms. The exterior differential $d$
 from co-exact forms into differential forms was shown to be a homomorphism of Lie algebras, taking the even Poisson bracket on co-exact forms to the odd Koszul bracket on exact forms. The exact forms in this case form an ideal in the graded Lie algebra of forms endowed with the Koszul bracket.

The extension of the algebra of physical
observables by differential forms now implies a proper
generalisation of the notion of a physical state. Regarding the
usual classical states, the points of a phase space, as
$0$-cycles in the sense of algebraic topology, it is natural to
consider the cycles of higher degrees represented by
higher-dimensional closed surfaces, possibly with singularities.
These may be viewed as sort of mixed states in classical mechanics,
when only part of the physical data is exactly known about the system.
Integration of co-exact forms over cycles then
yields the measured values of  physical observables in such
mixed states. From this perspective, the construction of the weak Koszul
bracket on forms being proposed in this paper, extends the  previous results of \cite{CV92, Mic85}
to the case of constrained and gauge invariant dynamical systems.
More precisely, it may be shown that each weak Koszul bracket
induces a genuine Poisson bracket in the space of projectible
co-exact forms, and the dynamics of form-valued physical observables
are governed by a projectible vector field compatible with the
Poisson bracket. We will detail this construction
elsewhere.

In sections 2 and 3 we review the construction introduced in the work \cite{LS05}. We recall the weak Poisson bracket and the embedding of the geometrical set-up into the extended manifold. After, we discuss the relationship between projectible multivector fields and the homological vector field, the BRST operator of an associated gauge system. In section 4 the construction given in the work \cite{KV08} is applied to our setting. We lift the even homotopy Poisson structure generated by the master function $S$ to the odd tangent bundle of this extended manifold, and then define the weak Koszul bracket corresponding to the lift of the weak Poisson bracket. A vector field on the leaf space $N$ is introduced as a projectible vector field on $M$, which can be seen to provide dynamics to an associated gauge system. Using this, we discuss what it means for projectible differential forms to be constant over the flow generated by this vector field. It is shown that the weak Koszul bracket of two such projectible forms produces another, analogous to the Poisson bracket of two integrals of motion. In section 5 we provide some examples of weak Poisson brackets.

In this paper the language of supermanifolds will be used; we will denote the Grassmann parity of a homogeneous object by $\epsilon(\cdot)$ when we wish to be explicit. All the constructions naturally generalise to the case when the manifold $M$ is a supermanifold, however for notational convenience we will assume that $M$ is a usual manifold (bosonic). Further, the construction extends to the case when $M$ is strictly a supermanifold and the weak even Poisson bracket is now a weak odd Poisson bracket.

\section{Projectible Multivectors and a Weak Poisson Bracket}
\subsection{Constraints and Regularity}
The construction outlined in the introduction may be generalised and rephrased in the language of vector bundles. Let $E\rightarrow M$ be a vector bundle and fix a linear connection $\nabla_E$ in $E$. The submanifold $\Sigma$ may be identified with the zero locus of a section $T\in C^\infty(M,E)$,
    \[T = T^a(x)e_a,\]
where $\{e_a\}$ is a local frame for $E$ over $U\subset M$. In order for the zero locus of $T$ to define the smooth submanifold $\Sigma$, it is required that the map $\nabla_ET:TM\rightarrow E$, defined by the covariant derivative, must be of constant rank in a tubular neighbourhood $U_\Sigma$ of $\Sigma$. Locally $\nabla_ET$ is given by
    \[\nabla_ET = dx^i\nabla_{E,i}T^ae_a = dx^i\left(\p_iT^a + T^bA^a_{bi}\right)e_a,\]
where $A^a_{bi}$ are the connection coefficients. It is clear then that this map has constant rank in a tubular neighbourhood if and only if the matrix of partial derivatives $||\p_iT^a||$ does. If the functions $T^a$ are assumed to be linearly independent, then the section $T$ intersects the base $M$ transversally and the map defined must be of constant rank over $\Sigma$ only. With $T$ fixed, the covariant derivative of $T$ defines a bundle homomorphism
    \[\nabla_ET:TM\rightarrow E.\]
The vector fields $R_\alpha$ may also be encoded into a homomorphism from an appropriate vector bundle $F\rightarrow M$ into the tangent bundle $TM$. Notice from the identification $\Hom(F,TM)\cong F^*\otimes TM$, that this homomorphism of vector bundles defines and is defined by a section
    \[R = R^i_\alpha f^\alpha\otimes\F{\p}{\p x^i}\in C^\infty(M,F^*\otimes TM)\]
for some local frame $\{f^\alpha\}$ of $F^*$. The homomorphism $R$ is also required to have constant rank in some tubular neighbourhood of $\Sigma$. These constant rank requirements on $R$ and $\nabla_E T$ are called the \emph{regularity conditions} \cite{LS05, KLS05}. When restricted to $\Sigma$ the homomorphisms $R$ and $\nabla_ET$ define the exact sequence of vector bundles
    \begin{equation}\label{Exact sequence regularity}
    0\rightarrow F\xrightarrow{R} TM\xrightarrow{\nabla_ET} E\rightarrow 0.
    \end{equation}
In general, this sequence will not form a complex off $\Sigma$. The assumption that the vector fields $R_\alpha$ are linearly independent is equivalent to the homomorphism $R$ being injective. The image of $R$ in $TM$ forms a distribution which is integrable only over $\Sigma$.

\begin{remark}\label{Remark reducibility in gt}
The vector fields $R_\alpha$ may not be linearly independent. Indeed, we may choose $n$ linearly dependent vector fields, locally defined only over $\Sigma$. These vector fields identify with a generating set of a gauge algebra, which are not, in general, global independent generators. When identified with a generating set of a gauge algebra, this linear dependence is equivalent to the presence of reducibility in the gauge algebra. Likewise, the equations $T^a(x)=0$ do not need to be assumed to be independent equations. Relations may exist between these which are called the Noether identities in the physical literature \cite{HT92}. To compensate for this dependence, additional vector bundles need to be introduced on either side of $F$ and $E$ in \eqref{Exact sequence regularity} to incorporate these relations, and this chain of vector bundles is required to form an exact sequence over $\Sigma$, see \cite{KLS05}. A generating set for the gauge algebra is called complete if it contains all the information for the Noether identities \cite{HT92}. That the vector fields $R_\alpha$ can be identified with a complete set of gauge generators is equivalent to exactness in the middle term $TM$ of the sequence \eqref{Exact sequence regularity}.
\end{remark}

\subsection{Projectible Multivector Fields}
Recall that the foliation determined by the vector fields $R_\alpha$ has the associated space of leaves $N$. In general $N$ may not be a smooth manifold, however one can describe the ``smooth'' functions and tensor fields on $N$ in terms of $M$. To do this, we shall introduce the supermanifold $\Pi T^*M$, the odd cotangent bundle to $M$.

Let $\Pi T^*M$ have local coordinates $(x^i,x^*_i)$ where $\epsilon(x^*_i) = \epsilon(x^i) + 1 = 1$, and the $x^*_i$ transform as $x^*_i = \F{\p x^{i'}}{\p x^i} x^*_{i'}$ for a change of coordinates $x = x(x')$. These coordinates are the odd momenta and the subalgebra $\fa(M)\subset C^\infty(\Pi T^*M)$ of fibrewise polynomial functions carries a natural grading by the odd momentum degree. Such fibrewise polynomial functions are (in the case of usual manifolds (bosonic) $M$,) naturally identified with multivector fields on $M$ by the odd isomorphism $\p_a\mapsto x^*_a$. This is a homomorphism of Lie algebras taking the Schouten bracket of multivector fields $\llbracket-,-\rrbracket$ to the canonical non-degenerate Poisson bracket on $\Pi T^*M$ induced from the canonical odd symplectic structure. These will be freely identified, for example, we will interchange the non-degenerate Poisson bracket and the Schouten bracket without reference.

Define an ideal $\mathfrak{I}$ in $\fa(M)$ generated by the functions $T^a$ and the vector fields $R_\alpha$,
    \[\mathfrak{I} = \langle T^a,R_\alpha\rangle.\]
The ideal $\ff{I}$ is closed under the Schouten bracket by equation \eqref{vector field relation} and since the $R_\alpha$ are tangent to $\Sigma$. In terms of the sections $T$ and $R$, it is defined by the images of the associated maps $\hat{T}$ and $R$, where $R$ is the homomorphism $R:F\rightarrow TM$ as before, and $\hat{T}:C^\infty(M,E^*)\rightarrow C^\infty(M)$ is obtained by fixing $T$ (so giving all linear combinations of the functions $T^a$ which uses the natural pairing of sections $C^\infty(M,E)\times C^\infty(M,E^*)\rightarrow C^\infty(M)$).
Multivector fields which are elements of this ideal will be called \emph{trivial} multivectors. Trivial multivectors are proportional to linear combinations of the functions $T^a$ and the vector fields $R_\alpha$, and so when restricted to the submanifold $\Sigma$ will be proportional only to the vector fields $R_\alpha$. An equivalence relation $\sim$ may be defined in the algebra $\fa(M)$; two multivectors $U$ and $V$ are said to be equivalent, $U\sim V$, if their difference lies in $\ff{I}$,
    \begin{equation}\label{equivalence of multivectors}U - V \in \mathfrak{I}.\end{equation}
A multivector field $U$ will be called \emph{projectible} if
    \begin{equation}\label{projectibility relations}\llbracket U,\ff{I}\rrbracket \subset \ff{I}.\end{equation}

In particular, projectible multivector fields are those that are tangent to $\Sigma$ and are constant over the integral submanifolds of the foliation defined by the vector fields $R_\alpha$. They form a closed subalgebra $\fa_P(M)\subset\fa(M)$ and define a Poisson normaliser of the ideal $\ff{I}\subset\fa_P(M)$. The algebra of projectible multivector fields on $M$ inherits the momentum grading from the larger algebra of multivector fields and can be written $\fa_P(M) = \oplus_{k\geq0}\fa^k_P(M)$. Define the smooth multivector fields on the leaf space $N$ then as the quotient
    \begin{equation*}
    \fa(N) = \fa_P(M)/\ff{I},
    \end{equation*}
of the projectible multivector fields modulo the trivial ones. These are precisely the multivector fields on $\Sigma$ which are constant over the integral submanifolds of the foliation under the equivalence relation defined above.

Some remarkable subspaces of $\fa(N)$ are $\fa^0(N)$ and $\fa^1(N)$. The first can be identified with the set of smooth functions on $N$, $\fa^0(N)\cong C^\infty(N)$ and is referred to as \emph{the algebra of physical observables} in physics. The second space is the space of vector fields on $N$. Such vector fields determine $1$-parameter groups of automorphisms of the algebra $C^\infty(N)$. For a projectible vector field $V\in\fa^1_P(M)$ and a function $F\in C^\infty(N)$, define the Lie derivative of $F$ along $V$ by
      \begin{equation}\label{time evolution}
      \dot{F} = \llbracket V,F\rrbracket.
      \end{equation}
The function $\dot F$ will remain projectible for as long as $V$ is projectible. In other words, $\fa^0(N)\cong C^\infty(N)$ is a module over the Lie algebra $\fa^1(N)$ and each projectible vector field defines a one-parameter group of automorphisms of $C^\infty(N)$.

\begin{remark}\label{dynamics remark}
When identified with a gauge system, the leaf space $N$ is the reduced phase space, that is, the phase space modulo the gauge equivalence. The smooth functions in $C^\infty(N)$ are then the measurable observables justifying the name of the algebra of physical observables. They are those functions which are invariant under gauge transformation. Any projectible vector field $V$ as above then provides physical dynamics to the space, and the $1$-parameter group of automorphisms \eqref{time evolution} determines the time evolution of the observable $F$ under the dynamics of $V$.
\end{remark}

It is of interest to quantise the algebra $C^\infty(N)$ consistently, and in order to do this a construction was given in \cite{LS05} to equip this algebra with a Poisson bracket. A natural way to do this is to introduce a projectible bivector field $P\in\fa^2_P(M)$ which satisfies a weak type Jacobi identity
    \begin{equation}\label{weak Jacobi}\llbracket P,P\rrbracket\in \ff{I}.\end{equation}
Such a bivector field induces a derived \emph{weak Poisson bracket} on the algebra of functions $C^\infty(M)$ by the following formula:
    \[\{F,G\} := \llbracket\llbracket P,F\rrbracket,G\rrbracket.\]
It is called a weak Poisson bracket since the Jacobi identity holds only up to trivial multivector fields, equivalent to the weak commutation relation \eqref{weak Jacobi} of the Poisson bivector $P$. The algebra $C^\infty(N)$ however receives a Poisson bracket in the usual sense since we pass to the quotient by the trivial multivector fields. A projectible vector field $V\in\fa^1_P(M)$ will be called \emph{weakly Poisson} if it preserves the weak Poisson structure in the sense that $\llbracket V,P\rrbracket\in\ff{I}$. A weak Poisson vector field $V$ together with the Poisson bivector $P$ defines a Hamiltonian structure on the space $N$. The algebra of functions $C^\infty(N)$ is then a Poisson algebra together with a $1$-parameter group of Poisson automorphisms given by \eqref{time evolution}.

\section{The Extended Manifold and the Master Function}
\subsection{The Extended Manifold}
The construction of the previous section may be encoded into a single function on an extended manifold. This function generates a first order differential operator called a homological vector field, which is precisely the BRST operator \cite{HT92} of an associated gauge system. This BRST operator encodes the information contained in the gauge system, and allows one to study the system through the cohomology of the operator.

To begin, the original manifold $M$ must be extended to include additional variables. Define the extended manifold $\C{M}$ to be the total space of the vector bundle $\Pi E\oplus \Pi F\rightarrow M$, which will complement the original variables $(x^i)$ with new odd coordinates $(\eta^a,c^\alpha)$ from the fibres of the bundles $\Pi E$ and $\Pi F$ respectively. The supermanifold $\C{M}$ must then be further extended to the odd cotangent bundle $\C{N}=\Pi T^*\C{M}$ to include the momenta variables with reversed parity: $(X^*_i,\eta^*_a,c^*_\alpha)$. We would like to interpret polynomial functions on $\C{N}$ as multivector fields on the extended manifold $\C{M}$, however, notice that under a change of coordinates $x = x(x')$, the odd momenta $X^*_i$ transform in the following way:
    \begin{equation*}
    X^*_i = J^{i'}_iX^*_{i'} + (-1)^{iB}z^{B'}T^B_{B'}\p_i\left(T^{A'}_B\right)z^*_{A'},
    \end{equation*}
where we write
    \[z^A = (\eta^a,c^\alpha), \quad z^*_A = (\eta^*_a,c^*_\alpha), \quad J^{i'}_i(x) = \F{\p x^{i'}}{\p x^i}(x), \quad\mbox{and}\quad z^A = z^{A'}T^{A}_{A'}(x).\]
This transformation is not correct for a vector field. To amend this, the manifold $\C{N}$ must be split into a direct sum using a linear connection from which we may define new variables that transform correctly. Fix a linear connection $\nabla=\nabla_{\Pi E}\oplus\nabla_{\Pi F}$ on $\C{M}$ with connection coefficients $A^B_{iA}$ which transform as
    \[A^B_{iA} = \p_i\left(T^{A'}_A\right)T^B_{A'} + (-1)^{i(A + A')}T^{A'}_AJ^{i'}_iA^{B'}_{i'A'}T^B_{B'}.\]
Define new coordinates $x^*_i = X^*_i - (-1)^{iB}z^BA^A_{iB}z^*_A$ called \emph{long momenta} \cite{Vor02}. By direct computation we obtain the transformation law $x^*_i = J^{i'}_ix^*_{i'}$, which now transform with respect to the Jacobian matrix. In what follows we will use long momenta on our manifold $\C{N}$, which now splits into a direct sum
    \begin{equation*}
    \C{N} = \Pi E\oplus\Pi F\oplus E^*\oplus F^*\oplus \Pi T^*M.
    \end{equation*}
As well as the Grassmann grading from the parity of the coordinates and the momentum grading from the odd cotangent structure, $\C{N}$ carries two additional gradings. The first is a $\mathbb{Z}$-grading called \emph{the ghost number}, the name being carried over from the physical usage when introducing ghost variables. Since our manifold $M$ is assumed to be purely even, the ghost grading is related to the Grassmann parity by the parity equal to the ghost number modulo $2$. The second grading is an $\mathbb{N}$-grading called \emph{the resolution degree}. The resolution degree is an auxiliary grading which arises in the construction which will be detailed later. For clarity, we express the coordinates on $\C{N}$ in the following table together with their respective gradings:
    \begin{equation}\begin{tabular}{c|c|c|c|c|c|c}
%  \hline
  % after \\: \hline or \cline{col1-col2} \cline{col3-col4} ...
   & $x^i$ & $\eta^a$ & $c^\alpha$ & $x^*_i$ & $\eta^*_a$ & $c^*_\alpha$ \\
  \hline
  Parity ($\epsilon$) & 0 & 1 & 1 & 1 & 0 & 0 \\
  Ghost ($\gh$) & 0 & -1 & 1 & 1 & 2 & 0 \label{momenta gradings}\\
  Momentum Degree ($\Deg$) & 0 & 0 & 0 & 1 & 1 & 1 \\
  Resolution ($\res$) & 0 & 1 & 0 & 0 & 0 & 1 \\
%  \hline
\end{tabular}\end{equation}
It will be convenient to introduce some collective notation $\phi^A = (x^i,\eta^a,c^\alpha)$ and $\phi^*_A = (x^*_i,\eta^*_a,c^*_\alpha)$. Notice from the table \eqref{momenta gradings} that there exists the relation
    \[\gh(\phi^*_A) = 1-\gh(\phi^A).\]
\begin{remark}
For the construction realising a Hamiltonian gauge theory, the variables $c^\alpha$ correspond to the standard BFV ghosts, and the $\eta^a$ corresponding to their ghost momenta. These variables are usually taken as conjugate variables with the canonical Poisson bracket on extended phase space, however we do not assume this. If the gauge theory is Lagrangian, then these correspond to the BV fields and anti-fields. In the presence of reducibility in the gauge system, ghosts for ghosts must be introduced as additional variables in the chains of vector bundles described in remark \ref{Remark reducibility in gt}. See \cite{HT92} for details.
\end{remark}

The manifold $\Pi T^*\C{M}$ comes equipped with a canonical odd Poisson bracket induced from the canonical odd symplectic structure $d\left(dx^iX^*_i + d\eta^a\eta^*_a + dc^\alpha c^*_\alpha\right)$. The use of long momenta associated with the connection $\nabla$ twists this odd bracket so that it has the following expressions in local coordinates:
    \begin{equation}\begin{array}{lcl}
   (x^*_i,c^\alpha)=c^\beta A^\alpha_{\beta i}\,, \qquad & (x^*_i,c^*_\alpha)=A^\beta_{i\alpha}c^*_\beta\,, \qquad & (\eta^*_a,\eta^b)=\delta^b_a\,,\\
   (x^*_i,\eta^a)=\eta^bA^a_{bi}\,, \qquad & (x^*_i,\eta^*_a)=A^b_{ia}\eta^*_b\,, \qquad & (c^*_\alpha,c^\beta)=\delta_\alpha^\beta\,,\label{Poisson bracket}\\
   (x^*_i,x^j)=\delta_i^j\,, \qquad & (x^*_i,x^*_j)=c^\beta\C{R}^\alpha_{\beta ij}c^*_\alpha  + \eta^a\C{R}^b_{aij}\eta^*_b\,, \qquad & \end{array}\end{equation}
and where all other brackets vanish identically. Here the terms $\C{R}^b_{aij}$ and $\C{R}^\alpha_{\beta ij}$ are the components of the curvatures of the connections $\nabla_{\Pi E}$ and $\nabla_{\Pi F}$ respectively. This is an odd Poisson bracket of ghost degree $-1$, which corresponds to the odd symplectic $2$-form $d\Theta$, where
    \[\Theta = dx^ix^*_i + \nabla_{\Pi E}\eta^a\eta^*_a + \nabla_{\Pi F}c^\alpha c^*_\alpha,\]
and
    \[\nabla_{\Pi E}\eta^a = d\eta^a - \eta^bdx^iA^a_{ib}(x), \qquad \nabla_{\Pi F}c^\alpha = dc^\alpha - c^\beta dx^iA^\alpha_{i\beta}(x).\]
Notice that in the case when the bundles $E$ and $F$ are trivial, all connection components and curvatures vanish, and we recover the canonical odd Poisson bracket on the odd cotangent bundle $\Pi T^*\C{M}$. In all cases, $\C{M}\subset\C{N}$ is a Lagrangian submanifold defined by setting all odd momenta $\phi^*$ to be zero.

\subsection{The Master Function}
Introduce $S\in C^\infty(\C{N})$ with the gradings
    \[\epsilon(S) = 0\,,\qquad \gh(S)=2\,,\qquad  \Deg(S)>0,\]
such that $S$ self commutes under the Poisson bracket \eqref{Poisson bracket},
    \begin{equation}\label{master equation}(S,S) = 0.\end{equation}
This function $S$ will be called the \emph{master function} which satisfies the \emph{master equation} \eqref{master equation}. The master function will encode all the information about the sections $T$ and $R$, the compatibility conditions between them, and the weak Poisson structure introduced. Locally, $S$ has the appearance
    \[S = T^a\eta^*_a + c^\alpha R^i_\alpha x^*_i  + P^{ij}x^*_jx^*_i + \left(c^\beta c^\alpha U^\gamma_{\alpha\beta} + V^{\gamma ij}x^*_jx^*_i + c^\alpha W^{\gamma i}_\alpha x^*_i + Y^{\gamma a}\eta^*_a\right)c^*_\gamma +\]
    \[\eta^a\left(c^\beta c^\alpha A^{i}_{\alpha\beta a}x^*_i + c^\alpha B^{ij}_{\alpha a} x^*_jx^*_i + D^{ijk}_ax^*_kx^*_jx^*_i + c^\alpha E^b_{\alpha a}\eta^*_b + F^{ib}_a\eta^*_bx^*_i\right) + \res\geq 2 \mbox{ terms}.\]
The coefficients in the terms of resolution degree zero may be identified with the components $T^a$, the vector fields $R_\alpha$ and the components of the weak Poisson bivector $P^{ij}$. The other functions appearing as coefficients are the higher structure functions and give higher relations between the lower degree terms. In order to obtain the regularity conditions imposed previously, we need to assume that
    \[\rank\left.\left(\F{\p^2S}{\p\phi^A\phi^*_A}\right)\right|_{dS=0} = (n,m),\]
where $m$ and $n$ are the ranks of the bundles $E$ and $F$ respectively.

Consider an expansion of $S$ in terms of the momentum degree
    \[S = \sum^\infty_{k=1}S^k = Q^A\phi^*_A + \Pi^{AB}\phi^*_B\phi^*_A + \Xi^{ABC}\phi^*_C\phi^*_B\phi^*_A +\cdots.\]
The master equation gives the following relations in the lowest degrees:
    \begin{equation}\label{master equation equations}(Q,Q) = 0\,, \qquad (Q,\Pi)=0\,, \qquad \mbox{and}\qquad (\Pi,\Pi) = -2(Q,\Xi).\end{equation}
The first of these ensures that $Q = Q^A(\phi)\phi^*_A$ is a homological vector field on the Lagrangian submanifold $\C{M}$; $Q$ is a Grassmann odd, ghost $+1$ vector field that squares to zero. The first few terms in local coordinates are
    \begin{equation}\label{Q}
    Q = T^a\F{\p}{\p\eta^a} + c^\alpha R^i_\alpha\left(c^\beta A^{\gamma}_{\beta i}\F{\p}{\p c^\gamma} + \eta^bA^{a}_{bi}\F{\p}{\p \eta^a} + \F{\p}{\p x^i}\right) + c^\beta c^\alpha U^\gamma_{\alpha\beta}\F{\p}{\p c^\gamma}+\cdots.
    \end{equation}
It is the  restriction of the Hamiltonian derivation defined by $S$ to the submanifold $\C{M}$, $(S,-)|_{\C{M}} = Q$.
Notice that $Q$ contains all the information about the components of the section $T$ and the vector fields $R_\alpha$. It is the component of $S$ which encodes the construction outlined in the previous section, independent of the Poisson structure which is present in the component $\Pi^{AB}$.

The existence and uniqueness of the master function $S$ arise as a solution to \eqref{master equation} subject to certain boundary conditions. These boundary conditions are precisely restrictions on the resolution degree coming from the extended Koszul-Tate resolution set into the background of these calculations. The existence of solutions to \eqref{master equation} follows from standard homological perturbation theory, the details of which may be found exactly in \cite{HT92}, or more generally in \cite{Tat57}. Here, only a sketch of the existence of $S$ will be provided.

Consider an expansion of $S$ in terms of the resolution degree
    \[S = \sum^\infty_{n=0}S_n.\]
The first terms in $S$ are known due to the nature of the construction. They consist of the Koszul-Tate differential $\delta$, a differential $d$, and a correction term $\tilde{s}$ such that $\delta^2=0$, $d^2=[\delta,\tilde{s}]$ and the terms of resolution degree $-2$, $-1$ and $0$ in $(S,S) = 2S^2$ vanish. The Koszul-Tate differential is a differential providing a homological resolution of the algebra $H_0(\delta) = C^\infty(\Sigma)$. That is, it implements the restriction to the surface $\Sigma$ on the level of homology. It does so by setting
    \[(\ker\delta)_0 = C^\infty(M), \qquad (\im\delta)_0 = \langle T^a\rangle.\]
It is a differential of resolution degree $-1$, ghost degree $+1$ and is required to kill anything with resolution degree $0$. These properties follow from the homological perturbation requirements. Using these, $\delta$ takes the form
    \begin{equation}\label{delta differential}
    \delta = T^a\F{\p}{\p\eta^a} + R^i_\alpha x^*_i\F{\p}{\p c^*_\alpha}.
    \end{equation}
The differential $d$ is called the longitudinal differential and is a differential along leaves of the foliation; to the vectors $R_\alpha$ correspond dual differential $1$-forms forming the dual distribution. The longitudinal differential is a differential on these forms. It implements the concept of invariant functions over the integral submanifolds by annihilating those invariant functions. These terms are all known from the information contained in the initial data. Therefore in order to construct $S$ we need to consider the higher terms in resolution degree $k$. In fact, we assume the construction for $k-1$ and look at the conditions on $k$. Substituting the sum $S' = \sum^k_{n=0}S_n$ into \eqref{master equation} we come to the equation
    \[\delta S_k = \rho_{k-1}(S_0,\ldots, S_{k-1}),\]
where $\rho_{k-1}$ is a function of resolution degree $k-1$. One can then eliminate $\rho_{k-1}$ by a suitable choice of $S_k$. From the Jacobi identity $((S,S),S)\equiv 0$, the term $\rho_{k-1}$ is $\delta$-closed by comparing resolution degrees. The assumed regularity conditions ensure that the differential $\delta$ is acyclic
    \[H_k(\delta)=0,\qquad k>0.\]
Therefore $\rho_{k-1}$ is not only $\delta$-closed, it is $\delta$-exact. So there exists $S_k$ satisfying this condition. The other terms in $S$ are found recursively from the previous terms. We need only to check the first equation to finish the sketch of the proof,
    \[\delta S_1 = \rho_0(S_0) = (S_0,S_0).\]
This gives equations that are equivalent to the defining relations for the section $T$, the homomorphism $R$, and the bivector field $P$.

Now $S$ is unique up to a canonical transformation. The ambiguity in the construction above, replacing
    \[S_k \mapsto S_k+\delta F_{k+1},\]
may be absorbed by a canonical transformation of the odd Poisson manifold $\C{N}$. This is given explicitly in \cite{KLS05} amongst others.

\subsection{Cohomology of $Q$}
The master function gives rise to  the homological vector field $Q$ on the Lagrangian submanifold $\C{M}\subset\C{N}$. This may be viewed as a differential which provides the algebra $C^\infty(\C{N})$ with the structure of a cochain complex $Q:C^{\infty,k}_l(\C{N})\rightarrow C^{\infty,k}_{l+1}(\C{N})$, naturally bi-graded by the momentum and ghost degrees, with the vector field
    \[QF = (S^1,F) \qquad \mbox{ for }\quad F\in C^\infty(\C{N}).\]
The cohomology groups for $Q$ decompose with respect to both the momentum degree $k$ and the ghost degree $l$,
    \[H(Q) = \bigoplus_{k,l}H^k_l(Q).\]

\begin{lemma}
For $k>l$, the groups $H^k_l(Q)$ are trivial.
\end{lemma}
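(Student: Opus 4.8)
The plan is to exploit the resolution-degree structure that underlies the construction of $S$. Recall that $\C{N}$ carries the auxiliary $\mathbb{N}$-grading $\res$, with the generators assigned $\res(x^i)=\res(c^\alpha)=\res(x^*_i)=\res(\eta^*_a)=0$ and $\res(\eta^a)=\res(c^*_\alpha)=1$. The key observation is that the momentum degree $\Deg$ and the ghost degree $\gh$ can be recombined with $\res$ so that the combination $\Deg-\gh$ is, up to the resolution degree, controlled by the number of $\eta$'s. Concretely, reading off the table \eqref{momenta gradings} one checks that $\Deg-\gh+2\,\#\eta-\#c-\#c^*$ type bookkeeping collapses, and the cleaner statement is: on a monomial, $\Deg-\gh$ equals (number of $\eta^a$ factors) $-$ (number of $c^\alpha$ factors) $+\,\res$-correction. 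I would first isolate the precise linear identity among $\Deg$, $\gh$ and the individual generator-counting operators, so that $k>l$ translates into a statement about which generators can appear in a cocycle of bidegree $(k,l)$.

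Next I would split $Q$ according to resolution degree, $Q=\delta+d+\tilde s+(\text{higher})$, where $\delta$ is the Koszul–Tate differential \eqref{delta differential} of $\res$-degree $-1$, $d$ is the longitudinal differential of $\res$-degree $0$, and the remaining pieces raise $\res$. Because $\delta$ is the lowest piece, a standard spectral-sequence / homological-perturbation argument reduces the computation of $H^k_l(Q)$, in the relevant range, to the cohomology of $\delta$ acting on the space of monomials of the given bidegree. By the regularity conditions quoted in the excerpt, $\delta$ is acyclic in positive resolution degree, $H_n(\delta)=0$ for $n>0$, and $H_0(\delta)=C^\infty(\Sigma)$ tensored with the polynomial algebra in $x^*$ and $c^\alpha$. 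So the only potential $Q$-cohomology sits where the resolution degree is forced to be zero — i.e. no $\eta^a$ and no $c^*_\alpha$ factors appear. On such monomials, $\Deg$ counts the number of $x^*_i$ and $\gh$ equals that same number (since $\gh(x^*_i)=\Deg(x^*_i)=1$) plus the number of $c^\alpha$ (which contribute $\gh=+1$, $\Deg=0$). Hence on the $\res=0$ part one has $l=\gh\geq\Deg=k$, i.e. $l\geq k$, contradicting $k>l$. Therefore the bidegree $(k,l)$ with $k>l$ admits no $\res=0$ monomials at all, the $\delta$-cohomology there vanishes, and the perturbation argument carries this up to $H^k_l(Q)=0$.

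The main obstacle is making the reduction "from $H(\delta)$ to $H(Q)$" rigorous in the correct range of bidegrees: one must check that the higher pieces $d,\tilde s,\ldots$ of $Q$, which raise $\res$, cannot create cohomology in a region where the associated graded (the $\delta$-cohomology) is identically zero. This is the usual argument that if the $E_1$-page of the resolution-degree spectral sequence vanishes in a given $(k,l)$-slot then so does the abutment, but one has to confirm that the spectral sequence converges — which follows from boundedness: for fixed total ghost degree $l$ and momentum degree $k$, the resolution degree of any monomial is bounded (each $\eta^a$ or $c^*_\alpha$ contributes positively to quantities that are pinned down by $k$ and $l$), so the filtration is finite in each bidegree and convergence is automatic. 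Once convergence is in hand, the vanishing of the $\res=0$ stratum established in the previous paragraph, together with acyclicity of $\delta$ in positive $\res$-degree, gives $E_1^{k,l}=0$ for $k>l$, hence $H^k_l(Q)=0$. I would close by remarking that the inequality is sharp: the diagonal $k=l$ already supports nontrivial classes (e.g. powers of $P^{ij}x^*_ix^*_j$ when $P$ is genuinely Poisson), so "$k>l$" cannot be relaxed to "$k\geq l$".
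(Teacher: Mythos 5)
Your proposal is correct and follows essentially the same route as the paper: both rest on the generator-wise inequality $\res\geq\Deg-\gh$ (so that $k>l$ forces strictly positive resolution degree, equivalently the $\res=0$ stratum satisfies $l\geq k$) combined with acyclicity of $\delta$ in positive resolution degree, the paper carrying out the resulting homological perturbation as an explicit staircase induction where you package it as the resolution-degree spectral sequence. The only slips are cosmetic: your description of the $\res=0$ generators omits $\eta^*_a$ (which has $\Deg=1$, $\gh=2$, so the inequality $l\geq k$ survives), and the boundedness of the filtration in fixed bidegree really comes from the $\eta^a$ being finitely many odd variables rather than from $k$ and $l$ alone.
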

\begin{proof}
First, $Q$ may be graded by resolution degree and decomposed into $Q = \delta + \Delta$, where
    \[\res(\delta) = -1,\qquad \res(\Delta) \geq0.\]
(Note that $\delta$ here is the differential \eqref{delta differential} restricted to the extended manifold $\C{M}$.) Further, we may grade $\Delta$ by resolution degree so that
    \[Q = \delta + \Delta_0 + \sum^\infty_{i=1}\Delta_i.\]
Let us choose $F\in H^k_l(Q)$ such that $k>l$. Comparing the degrees in table \eqref{momenta gradings}, it can be seen that $\res(\phi^{(*)})\geq\Deg(\phi^{(*)})-\gh(\phi^{(*)})$, so that
    \[\res(F)>0.\]
Suppose that the lowest resolution degree term in $F$ has degree $r$. Grade $F$ as $F=\sum^\infty_{n=r}F_n$ (starting from $n=r$), with $\res(F_n)=n$. We wish to find $G\in C^{\infty,k}_{l-1}(Q)$ such that $F = QG$. Note that $\res(G)>\res(F)$.

As $F$ is a $Q$-cocycle, $QF=0$ and in resolution degree $r-1$, we have
    \[\delta F_r=0\Rightarrow F_r = \delta G_{r+1}\]
for some $G_{r+1}$ since $\delta$ is acyclic in positive resolution degree. Looking at the second lowest degree, resolution degree $r$, we come to
    \begin{equation}\label{homological pert}
    \delta F_{r+1} + \Delta_0F_r = 0.
    \end{equation}
From $Q^2F\equiv0$ we obtain
    \[\delta(\Delta F) + \Delta(\delta F) + \Delta^2F\equiv 0.\]
On expanding this in resolution degree, we come to the equations
    \begin{equation}\label{first equation}
    \delta\Delta_0 + \Delta_0\delta = 0,
    \end{equation}
    \[\delta\Delta_1 + \Delta_1\delta + \Delta^2_0=0,\]
plus equations of resolution degree greater than $0$. Since $F_r = \delta G_{r+1}$, equation \eqref{homological pert} becomes
    \begin{align*}
    \delta F_{r+1} & = -\Delta_0F_r\\
    & = -\Delta_0(\delta G_{r+1})\\
    & = \delta(\Delta_0 G_{r+1}) \qquad \mbox{ by } \eqref{first equation}.\end{align*}
Therefore
    \[\delta(F_{r+1} - \Delta_0 G_{r+1}) = 0 \Rightarrow F_{r+1} - \Delta_0 G_{r+1} = \delta G_{r+2},\]
for some function $G_{r+2}$. Finally,
    \[F_{r+1} = \Delta_0 G_{r+1} + \delta G_{r+2} = QG|_{\res(QG) = r+1}.\]
Continually solving gives a function $G$ such that $QG=F$ and so all cohomology groups are trivial if $k>l$.
\end{proof}

\begin{proposition}\label{cocycle projectible condition}
A multivector field $U$ is projectible if and only if it is a $Q$-cocycle.
\end{proposition}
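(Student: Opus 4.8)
The plan is to compute $QU=(S^1,U)$ directly for a multivector field $U\in\fa(M)$, regarded as the function on $\C{N}$ depending only on the coordinates $(x^i,x^*_i)$, and to recognise the vanishing of $QU$ modulo trivial multivectors as the projectibility relation \eqref{projectibility relations}. The first, elementary, step is to reduce projectibility to the generators of $\mathfrak{I}$: since $\mathfrak{I}=\langle T^a,R_\alpha\rangle$ and $\llbracket U,-\rrbracket$ is a graded derivation, $\llbracket U,\mathfrak{I}\rrbracket\subset\mathfrak{I}$ holds if and only if $\llbracket U,T^a\rrbracket\in\mathfrak{I}$ and $\llbracket U,R_\alpha\rrbracket\in\mathfrak{I}$ for every $a$ and $\alpha$ (e.g.\ $\llbracket U,T^ag\rrbracket=\llbracket U,T^a\rrbracket g\pm T^a\llbracket U,g\rrbracket\in\mathfrak{I}$). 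It therefore suffices to see both families of conditions inside $QU$.

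For the main computation, write $S^1=T^a\eta^*_a+c^\alpha R^i_\alpha x^*_i+(\res\geq1\ \mbox{terms})$ as in \eqref{Q}. Since $U$ is independent of the resolution-degree-one coordinates $\eta^a$ and $c^*_\alpha$, the Koszul--Tate piece \eqref{delta differential} annihilates it, $\delta U=0$, so $QU$ begins in resolution degree zero. Bracketing $U$ with $T^a\eta^*_a$ via \eqref{Poisson bracket} yields $\pm\llbracket U,T^a\rrbracket\eta^*_a$ plus a term carrying an explicit factor $T^b$ from the connection coefficients, hence already in $\mathfrak{I}$; bracketing with $c^\alpha R^i_\alpha x^*_i$ yields $\pm c^\alpha\llbracket U,R_\alpha\rrbracket$ plus curvature corrections carrying a factor $c^*$ or $\eta^a$, hence of positive resolution degree. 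Every remaining term of $S^1$ has positive resolution degree, and because $U$ contains no $\eta$'s and no $c$'s the resolution-lowering brackets $(\eta^*_a,\eta^b)$ and $(c^*_\alpha,c^\beta)$ cannot act on it, so those terms contribute to $QU$ only in positive resolution degree. Hence the resolution-degree-zero component of $QU$ equals $\pm\llbracket U,T^a\rrbracket\eta^*_a\pm c^\alpha\llbracket U,R_\alpha\rrbracket$ up to terms with an explicit factor $T^b$. Because $\mathfrak{I}$ is generated in resolution degree zero, and $\eta^*_a,c^\alpha$ are independent supercoordinates absent from $U$, this component lies in $\mathfrak{I}$ exactly when every $\llbracket U,T^a\rrbracket$ and every $\llbracket U,R_\alpha\rrbracket$ does, i.e.\ exactly when $U$ is projectible; in particular a $Q$-cocycle is projectible. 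For the converse, a projectible $U$ must then be completed to a genuine $Q$-cocycle by adding successive corrections drawn from $\mathfrak{I}$, exactly along the lines of the proof of the preceding Lemma, using the acyclicity of $\delta$ in positive resolution degree; the stability $Q\mathfrak{I}\subset\mathfrak{I}$ that keeps the construction inside $\mathfrak{I}$ follows from $R_\alpha$ being tangent to $\Sigma$ and from \eqref{vector field relation}, which give $QT^a,QR_\alpha\in\mathfrak{I}$.

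The step I expect to be the genuine obstacle is the bookkeeping in the second paragraph: keeping the signs of the twisted odd Poisson bracket \eqref{Poisson bracket} straight, and verifying that the connection- and curvature-generated corrections together with the contributions of all higher terms of $S^1$ (which are themselves constrained by $(S^1,S^1)=0$) genuinely stay in $\mathfrak{I}$ or in positive resolution degree rather than producing a spurious obstruction. The resolution-degree filtration and the acyclicity of $\delta$ recorded in the Lemma are precisely what organise this into a finite check at each stage.
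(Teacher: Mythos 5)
Your proposal is correct and follows essentially the same route as the paper: expand in resolution degree, identify the resolution-degree-zero obstruction in $QU$ with the projectibility conditions $\llbracket U,T^a\rrbracket,\llbracket U,R_\alpha\rrbracket\in\mathfrak{I}$, and complete a projectible $U$ to a genuine cocycle recursively via the acyclicity of $\delta$ in positive resolution degree, exactly as in the preceding Lemma. The only difference is that you carry out explicitly the degree-zero computation with the twisted bracket and the reduction of projectibility to the generators of $\mathfrak{I}$, both of which the paper leaves implicit.
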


\begin{proof}
Let $F$ be a multivector field with $\Deg F=k$ which is extended homogeneously by ghost degree $k$ terms. In resolution degree,
    \[F = \sum^\infty_{r=0}F_r,\]
noting that $\res(F)\geq0$ by the observation in Lemma 1. Looking at fixed resolution degree, the term $QF$ gives the equations
    \[\delta F_{r+1} = \rho_{r+1}(F_0,\ldots,F_r).\]
Suppose that $QF=0$. Then the equation in resolution degree $0$ is satisfied if and only if $F$ is projectible, that is $\llbracket F, \ff{I}\rrbracket\in\ff{I}$.
Conversely, if $F$ is projectible, the first equation is satisfied in resolution degree $0$. By a similar argument to the proof of Lemma 1, all higher resolution terms are $Q$-coboundaries due to the condition of $Q^2F\equiv0$ and the acyclicity of $\delta$ in positive resolution degrees. The projectibility conditions are required in this case to start the induction due to the fact that $\delta$ is not acyclic in degree $0$. Therefore all projectible multivectors may be lifted to a $Q$-cocycle. Two equivalent multivectors, in the sense of the relation  \eqref{equivalence of multivectors}, differ by a $Q$-coboundary.
\end{proof}
From the construction of $S$, the $k$-th cohomology $H^k(Q)$ of the differential $Q$ coincides with the $k$-th cohomology group $H^k(d|H_0(\delta))$, the cohomology of $d$ in the homology group $H_0(\delta)$. In particular, the first cohomology group $H^0_0(Q)$ in ghost degree $0$ is isomorphic to the algebra of physical observables $C^\infty(N)$, \cite{KLS05, LS05}. Intuitively, $H^0_0(Q)$ contains the classes of functions restricted to the manifold $\Sigma$ by $\delta$ which are constant along the integral submanifolds of the foliation given by the vector fields $R_\alpha$. The invariance is shown by the annihilation of these by the differential $d$. Other notable cohomology groups include $H^1_1(Q)$ and $H^2_2(Q)$ which consist of the all projectible vector fields and bivector fields  on $M$ respectively under restriction to $\Pi T^*M$.

The master function $S$ generates a sequence of higher Poisson brackets on the manifold $\C{M}$ which provides $C^\infty(\C{M})$ with the structure of a homotopy Poisson algebra, or a $P_\infty$-algebra. Define the $k$-bracket as
    \begin{equation}\label{infinity poisson structure}
    \{F_1,\ldots,F_k\}_S := \left.\big(\cdots(S,F_1),\ldots,F_k\big)\right|_{\C{M}},
    \end{equation}
as a sequence of nested brackets, for functions $F_1,\ldots,F_k\in C^\infty(\C{M})$. Each bracket is a derivation with respect to each argument and the Grassmann parity of a $k$-bracket is equal to $k$ mod $2$. Since $(S,S)=0$, the higher Jacobi identities hold for all higher brackets, \cite{Vor05}. These higher Jacobi identities are Jacobi identities satisfied up to higher homotopies. In the case of the $3$-bracket, this is a genuine homotopy with $Q$ and the trilinear bracket:
    \begin{align*}
    \{\{F,G\},H\} + (-1)^{GH}\{\{F,H\},G\}+ & (-1)^{F(G+H)}\{\{G,H\},F\} =\\
 Q\{F,G,H\} + & \{QF,G,H\}+(-1)^{FG}\{QG,F,H\}+(-1)^{H(F+G)}\{QH,F,G\}.\end{align*}
The usual Jacobi identity is satisfied up to $Q$-coboundaries. Therefore, on passing to the $Q$-cohomology, the function $S$ induces a genuine even Poisson bracket on the space $H^0_\bullet(Q)$. This is exactly the Poisson structure induced from the weak Poisson bracket on the leaf space $N$, but with the extension to the extended manifold $\C{M}$.

%\begin{remark}\label{remark other construction}
%This construction holds for the case when $M$ is strictly a supermanifold and the weak Poisson bracket is now a weak odd Poisson bracket. Indeed, this is necessary when passing from the Lagrangian to Hamiltonian formalism. In the Hamiltonian formalism, the odd cotangent bundle to $\C{M}$ is used, and an even master function $S$ generates an even homotopy Poisson structure on $\C{M}$ through the Schouten bracket. In the Lagrangian formalism, it is necessary to pass to the usual cotangent bundle $T^*\C{M}$ where the master function will now be an odd Grassmann function. Together with the canonical even Poisson bracket on $T^*\C{M}$, this will generate an odd homotopy Poisson structure on $\C{M}$ and which will correspond to weak odd Poisson bracket on $M$. Such differences are merely technical and the construction can be carried out as appropriate.
%\end{remark}

\section{A Lift to the Algebra of Forms}
\subsection{Lifting the Master Function}
In \cite{KV08}, a construction was given to canonically lift any Poisson bracket or sequence of Poisson brackets on a manifold to an odd Poisson bracket or sequence of odd Poisson brackets on the odd tangent bundle. In the case of an even binary bracket, this lift reproduces the odd Koszul bracket on differential forms. To do this, a function on the cotangent bundle is canonically identified using the homological vector field corresponding to the Poisson structure. Our approach will be slightly different, in that we will lift the master function $S$ to the odd tangent bundle $\Pi T\C{N}$ using a secondary canonical identification. On this space, introduce the odd velocities to their coordinates on $\C{N}$, with gradings as in the following table:
\[\begin{tabular}{c|c|c|c|c|c|c|c|c|c|c|c|c}
%  \hline
  % after \\: \hline or \cline{col1-col2} \cline{col3-col4} ...
   & $x^i$ & $\eta^a$ & $c^\alpha$ & $x^*_i$ & $\eta^*_a$ & $c^*_\alpha$ & $dx^i$ & $d\eta^a$ & $dc^\alpha$ & $dx^*_i$ & $d\eta^*_a$ & $dc^*_\alpha$ \\
  \hline
  Parity & 0 & 1 & 1 & 1 & 0 & 0 & 1 & 0 & 0 & 0 & 1 & 1 \\
  Ghost & 0 & -1 & 1 & 1 & 2 & 0 & -1 & -2 & 0 & 0 & 1 & -1 \\
  Momentum Deg & 0 & 0 & 0 & 1 & 1 & 1 & - & - & - & - & - & - \\
  Res Deg & 0 & 1 & 0 & 0 & 0 & 1 & - & - & - & - & - & - \\
%  \hline
\end{tabular}\]
Collectively, let $d\phi^A = (dx^i,d\eta^a,dc^\alpha)$ and $d\phi^*_A = (dx^*_i,d\eta^*_a,dc^*_\alpha)$. This manifold may be naturally identified with $T^*(\Pi T\C{M})$ by the diffeomorphism
    \[\kappa:\Pi T(\C{N})\mapsto T^*(\C{N}^*),\]
which can be seen as an analogue to the Tulczyjew isomorphism, \cite{Tul77} (or see \cite{MX94}). It is a composition of two natural identifications,
    \[\Pi T(\C{N}) \cong T^*(\C{N})\quad \mbox{ and } \quad T^*(\C{N})\cong T^*(\C{N}^*).\]
The first is the canonical pairing of the odd tangent bundle with the cotangent bundle using the canonical odd symplectic form on $\Pi T^*\C{M}$. The second is the canonical isomorphism which interchanges fibre coordinates with their corresponding momenta. This isomorphism was first described in the work of Mackenzie and Xu, and may be found in \cite{MX94}, which was then extended to supermanifolds by Voronov, \cite{Vor02}. (In \cite{Vor02} the expression gives a symplectomorphism whereas we prefer to take the choice of signs which gives an anti-symplectomorphism between the two canonical structures.) In terms of local coordinates $(\varphi^A,d\varphi^A,p_A,\pi_A)$ on $T^*(\C{N}^*)$, $\kappa$ is described as
    \[\kappa^*(\varphi^A) = \phi^A ,\quad \kappa^*(d\varphi^A) = (-1)^Ad\phi^A,\quad \kappa^*(p_A) = d\phi^*_A,\quad \kappa^*(\pi_A) =-\phi^*_A .\]
Under $\kappa$, the symplectic form $\alpha = d\pi_Ad(d\varphi^A) + dp_Ad\varphi^A$ is given by $\kappa^*\alpha = d(d\phi^*_A)d\phi^A - d(d\phi^A)d\phi^*_A$, which generates the even non-degenerate Poisson bracket of ghost degree $0$:
    \[\{F,G\} = (-1)^{A(F+1)}\left(\F{\p F}{\p d\phi^*_A}\F{\p G}{\p \phi^A} + (-1)^{F}\F{\p F}{\p d\phi^A}\F{\p G}{\p \phi^*_A}\right) - (-1)^{AF}\left(\F{\p F}{\p \phi^A}\F{\p G}{\p d\phi^*_A }-(-1)^F\F{\p F}{\p \phi^*_A}\F{\p G}{\p d\phi^A} \right).\]
Define the function $\Delta\in C^\infty(\Pi T\C{N})$ as the canonical Grassmann odd, ghost $-1$ function $\Delta = (-1)^Ad\phi^Ad\phi^*_A.$ Notice that $\Delta$ self-commutes non-trivially; $\{\Delta,\Delta\}=0$. Using this, the master function $S$ may be lifted to the odd tangent bundle,
    \[\Psi = \{\Delta,S\}\]
as an odd function of ghost degree $+1$.  Explicitly, for selected terms,
    \[\Psi = dx^i\left(\p_iT^a\eta^*_a + c^\alpha\p_iR^j_\alpha x^*_j + \p_iP^{jk}x^*_kx^*_j\right) - dc^\alpha R^i_\alpha x^*_i\]
    \[+\left(c^\alpha R^i_\alpha + 2P^{ij}x^*_j + \eta^ac^\beta c^\alpha A^{i}_{\alpha\beta a}\right)dx^*_i+ T^ad\eta^*_a + c^\beta c^\alpha U^\gamma_{\alpha\beta}dc^*_\gamma +\cdots.\]

The function $\Psi$ is Poisson-nilpotent if and only if $S$ satisfies the master equation \eqref{master equation}. Indeed, with the even bracket $\{-,-\}$, $\Delta$ generates the odd bracket $(-,-)$ on $\C{N}$ by the formula $(-,-):= \left.\big\{\{\Delta,-\},-\big\}\right|_{\C{N}}$. Therefore, if $\Psi$ Poisson commutes it follows from the Jacobi identity
    \begin{equation}\label{Omega comm}
    0 = \{\Psi,\Psi\} = \big\{\{\Delta,S\},\{\Delta,S\}\big\} = \Big\{\big\{\{\Delta,S\},\Delta\big\},S\Big\}-\Big\{\Delta,\big\{\left\{\Delta,S\right\},S\big\}\Big\},
    \end{equation}
that \eqref{master equation} is satisfied, since the first term on the right hand side vanishes from the nilpotency of $\Delta$.

In the same way that multivector fields on a manifold are identified with functions on its odd cotangent bundle, differential forms may be identified with functions on the odd tangent bundle. Identify differential forms on the extended manifold $\C{M}$ with fibrewise polynomial functions $\Omega(\C{M})\subset C^\infty(\Pi T\C{M})$. Under the identification $\kappa$, they are those functions which are fibrewise polynomial on the Lagrangian submanifold $\C{L}\subset \Pi T\C{N}$,
    \[\C{L} = \left.\big\{(\phi^A,\phi^*_A,d\phi^A,d\phi^*_A)\in\Pi T\C{N}\,\,\right|\,\,\phi^* = 0 = d\phi^*\big\}.\]
Specifically, for $\omega\in\Omega(\C{M})$, $\kappa^*\omega = \kappa^*\omega(\phi,d\phi)$. We will freely identify these functions and write simply $\omega\in\Omega(\C{M})$ for a function $\kappa^*\omega\in\kappa^*(\Omega(\C{M}))$. (Notice that unlike multivector fields however we do not need to introduce new coordinates via a connection, since the $d\phi$ transform correctly.)

Since $S$ satisfies the master equation, $\Psi$ defines a homological vector field on $\C{L}$, given as the restriction to $\C{L}$ of the derivation defined by $\Psi$: $\hat{Q} = \{\Psi,-\}|_{\C{L}}$.
Locally,
    \begin{equation}\label{Q-exp}
    \hat{Q} = \left(c^\alpha R^i_\alpha  + \eta^ac^\alpha c^\beta A^{i}_{\beta\alpha a}\right)\F{\p}{\p x^i} + c^\beta c^\alpha U^\gamma_{\alpha\beta}\F{\p}{\p c^\gamma} + T^a\F{\p}{\p\eta^a} +dx^i\p_iT_a\F{\p}{\p d\eta_a}\end{equation}
    \[+ \left(c^\alpha dx^i\p_iR^j_\alpha + dc^\alpha R^j_\alpha\right)\F{\p}{\p dx^j} + c^\beta c^\alpha U^\gamma_{\alpha\beta}\F{\p}{\p dc^\gamma} +\cdots.\]
Since $\{\Psi,\Psi\}=0$, $\Psi$ generates a sequence of derived brackets on the Lagrangian submanifold $\C{L}$ by
    \begin{equation}\label{odd koszul-type brackets}
    [\omega_1,\ldots,\omega_k]:=\left.\big\{\cdots\{\Psi,\omega_1\},\ldots,\omega_k\big\}\right|_{\C{L}},
    \end{equation}
for functions $\omega_i\in C^\infty(\C{L})$. This sequence of odd brackets provides the algebra $C^\infty(\C{L})$ with the structure of an $S_\infty$-algebra or an odd homotopy Poisson algebra. This is precisely the $S_\infty$-structure defined in \cite{KV08} on the algebra of differential forms, corresponding to a $P_\infty$-structure on the base manifold $\C{M}$; the $P_\infty$-structure in our case is defined by the sequence of brackets \eqref{infinity poisson structure}.

\subsection{The Cohomology of $\hat{Q}$}
Analogous to the case of $Q$, the vector field $\hat{Q}$ acts as a differential of ghost degree $+1$ on $C^\infty(\C{L})$. This turns the algebra into a complex $\hat{Q}:C^\infty_l(\C{L})\rightarrow C^\infty_{l+1}(\C{L})$ naturally graded by ghost degree $l$. Since all momentum terms are zero over $\C{L}$, it is beneficial to use the natural grading on the algebra of forms pulled back to $\C{L}$. We call this grading the \emph{form degree}, with the terms $d\phi^A$ having form degree $+1$, and all other variables are assigned zero. The cohomology of the differential $\tilde{Q}$ decomposes with respect to the form degree $k$ and the ghost degree:
    \[H(\hat{Q}) = \bigoplus_{k,l}H^k_l(\hat{Q}).\]
Unlike the algebra of multivector fields on $\C{M}$ which has only the odd Poisson bracket \eqref{Poisson bracket}, the algebra of differential forms on $\C{M}$ inherits the whole sequence of odd Poisson brackets given by \eqref{odd koszul-type brackets}. As such, we obtain a sequence of odd Poisson brackets when passing to the cohomology of $\hat{Q}$. In particular, there is a true odd binary bracket on ($\hat{Q}$-cohomology classes of) differential forms, which may be viewed as a direct lift (in the sense of \cite{KV08}) of the Poisson bracket that was observed on the cohomology $H^0(Q) = \oplus_lH^0_l(Q)$. Such a binary bracket may be called the Koszul bracket of differential forms on $N$, and the binary bracket on $C^\infty(\C{L})$ restricts to a weak Koszul-type bracket on differential forms on $M$. To parallel Proposition \eqref{cocycle projectible condition} then, we make the following definition.

\begin{definition}
A \emph{projectible differential form on $M$} is a function $\omega\in C^\infty(\Pi TM)$ such that its appropriate extension to the Lagrangian submanifold $\C{L}$ is a $\hat{Q}$-cocycle.
\end{definition}

Indeed, for a form to be invariant over the integral submanifolds defined by the vector fields $R_\alpha$, it is natural to ask for the condition $\C{L}_{R_\alpha}\omega \propto T + dT$; that the Lie derivative be proportional to the components $T^a$ of the section $T$ and their differentials. With expansion (\ref{Q-exp}), one can see that this condition is reproduced exactly by the cocycle equation $\hat{Q}\omega=0$ in the lowest degrees.

%Considering $\tilde{Q}\omega = 0$, the relation $\C{L}_{R_\alpha}\omega \propto T + dT$ may be recovered by comparison of degrees.
%That the Lie derivative of $\omega$ over the gauge generators be proportional the the equations $T_a$ and their differentials
%is the natural condition for a form to be invariant over a submanifold. In the simplest case, let $\omega = dx^i\omega_i(x)$ be a $1$-form over $M$, that is, $\omega$ has form degree $+1$. It is of ghost degree $-1$ and so we extend over $\C{M}$ by ghost degree $-1$ terms,
%\[\omega'(\phi) = dx^i(\omega_i(x) + \eta_a\omega^a_{i\alpha}c^\alpha + \cdots) + d\eta_a(\omega^a_\alpha(x)c^\alpha + \cdots) + dc^\alpha(\eta_a\tilde{\omega}^a_\alpha(x) + \cdots).\]
%Applying $\tilde{Q}$ to this we obtain a ghost degree $0$ function. This function may be expanded and for terms containing $c$ and $dx$, the Lie derivative appears
%    \[c^\alpha dx^jR^i_\alpha \p_i\omega_j - c^\alpha dx^i\p_iR^j_\alpha \omega_j = c^\alpha dx^i\omega^a_{i\alpha}T_a - c^\alpha dx^i\p_iT_a\omega^a_\alpha .\]

\begin{proposition}\label{prop2}\hfill

\begin{enumerate}
  \item Under the odd binary bracket, projectible forms form a closed Poisson subalgebra.
  \item This subalgebra is closed under $d$, the exterior differential.
  \item The algebra of projectible differential forms is a module over the algebra of projectible vector fields with respect to the interior product.
\end{enumerate}
\end{proposition}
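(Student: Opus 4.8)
The organising principle is that a projectible form is, by definition, a $\hat{Q}$-cocycle, so in each of the three cases it suffices to establish an operator identity expressing the compatibility of $\hat{Q}$ with the operation in question and then observe that cocycles are carried to cocycles. The three operations---the binary bracket $[-,-]$, the exterior differential $d$, and the interior product $\iota_{X}$---are all given by Hamiltonian vector fields for the canonical even bracket $\{-,-\}$ on $\Pi T\C{N}$, generated respectively by $\Psi$, by $\Delta$, and (for $\iota_{X}$, with $X$ a vector field) by the function $X=X^{A}(\phi)\phi^{*}_{A}$ on $\C{N}$ representing $X$; moreover the two generators $\Psi$ and $\Delta$ Poisson-commute. \emph{For part (1)}, $\hat{Q}$ is a vector field, hence a graded derivation of the pointwise product, so the wedge product of two projectible forms is projectible; and by the $S_{\infty}$-identities that hold because $\{\Psi,\Psi\}=0$---in lowest degree, the statement that $\hat{Q}$ is a graded derivation of the bracket \eqref{odd koszul-type brackets}---we get $\hat{Q}[\omega_{1},\omega_{2}]=[\hat{Q}\omega_{1},\omega_{2}]\pm[\omega_{1},\hat{Q}\omega_{2}]$, so $[\omega_{1},\omega_{2}]$ is a cocycle whenever $\omega_{1}$ and $\omega_{2}$ are. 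Thus projectible forms form a subalgebra closed under both the product and the binary bracket. As for the even bracket on $H^{0}(Q)$ obtained above, the word ``Poisson'' is to be read through cohomology: the next $S_{\infty}$-identity expresses the Jacobiator of $[-,-]$, restricted to cocycles, as $\hat{Q}$ applied to the trilinear bracket, i.e.\ as a $\hat{Q}$-coboundary, so the induced bracket---the Koszul bracket of forms on $N$---is a genuine odd Poisson bracket on $\hat{Q}$-cohomology.

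\emph{For part (2)}, I would first record that $d$ is Hamiltonian for $\{-,-\}$: since $\{\Delta,\phi^{*}_{A}\}\in\langle d\phi^{*}\rangle$ and $\{\Delta,d\phi^{*}_{A}\}=0$, the Hamiltonian vector field $\{\Delta,-\}$ preserves the Lagrangian submanifold $\C{L}=\{\phi^{*}=0=d\phi^{*}\}$, and on $\C{L}$ it sends $\phi^{A}\mapsto\pm d\phi^{A}$ and $d\phi^{A}\mapsto0$; that is, $d=\{\Delta,-\}|_{\C{L}}$ on $\Omega(\C{M})=C^{\infty}(\C{L})$. Next, $\{\Psi,\Delta\}=0$: starting from $\Psi=\{\Delta,S\}$ and using $\{\Delta,\Delta\}=0$, the graded Jacobi identity and the antisymmetry of the even bracket together force $\{\Psi,\Delta\}=\{\{\Delta,S\},\Delta\}$ to equal its own negative. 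Since $\Psi$ and $\Delta$ both preserve $\C{L}$, this yields $[\hat{Q},d]=\{\{\Psi,\Delta\},-\}|_{\C{L}}=0$, whence $\hat{Q}(d\omega)=\pm d(\hat{Q}\omega)=0$ for every projectible $\omega$, so $d\omega$ is again projectible.

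\emph{For part (3)}, regarding a vector field $X$ on $\C{M}$ as the function $X=X^{A}(\phi)\phi^{*}_{A}$ on $\C{N}$, one checks exactly as in (2) that $\{X,-\}$ preserves $\C{L}$ and restricts there to (up to a sign) the interior product $\iota_{X}$ acting on forms. Hence $[\hat{Q},\iota_{X}]=\{\{\Psi,X\},-\}|_{\C{L}}$, and the heart of the matter is a short count of $\phi^{*}$- and $d\phi$-degrees showing that, once this operator is applied to a form on $\C{L}$, only the momentum-degree-one component $S^{1}$ of $S$ survives; this gives $[\hat{Q},\iota_{X}]=\pm\iota_{QX}$, where $QX=(S^{1},X)$ is the action of the homological vector field $Q$ on multivector fields from Proposition~\ref{cocycle projectible condition}. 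Consequently, if $X$ is a projectible vector field---equivalently, $QX=0$---and $\omega$ is a projectible form, then $\hat{Q}(\iota_{X}\omega)=\pm\iota_{QX}\omega\pm\iota_{X}\hat{Q}\omega=0$, so $\iota_{X}\omega$ is projectible. The module identities themselves---$C^{\infty}(N)$-bilinearity, the Leibniz rule for $\iota_{X}$ over the wedge product, and $\iota_{X}\iota_{Y}=-\iota_{Y}\iota_{X}$, which make the forms a module over the exterior algebra of vector fields---are the classical ones and restrict without change to the projectible subalgebras.

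The main obstacle is computational rather than conceptual: all three identities are structural consequences of $\{\Psi,\Psi\}=0$ and $\{\Psi,\Delta\}=0$, but they are played out inside the odd bracket \eqref{Poisson bracket} twisted by the long-momentum connection, so the real effort lies in tracking parities, ghost numbers and the connection-dependent signs, and in verifying the $\C{L}$-invariance of each Hamiltonian flow. The single most delicate step is the identity $[\hat{Q},\iota_{X}]=\pm\iota_{QX}$ of part (3): one has to see precisely that the components $S^{\geq 2}$ drop out of $\{\Psi,X\}$ once it is contracted against a form, so that the bare cocycle condition $QX=0$---and nothing stronger---is exactly what is required. A minor but genuine subtlety, worth flagging in the statement of (1), is that the binary bracket is only homotopy-Poisson on the subalgebra of $\hat{Q}$-cocycles and becomes strictly Poisson only after passing to $\hat{Q}$-cohomology.
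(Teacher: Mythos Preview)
Your proof is correct and follows essentially the same route as the paper: represent each of $[-,-]$, $d$, and $\iota_{X}$ as (restrictions of) Hamiltonian vector fields for the even bracket on $\Pi T\C{N}$, then use the Jacobi identity together with $\{\Psi,\Psi\}=0$ and $\{\Delta,\Delta\}=0$ to show that $\hat{Q}$ commutes with each operation up to terms that vanish on cocycles. The paper carries this out term by term in exactly the way you outline.

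The one place you go beyond the paper is part~(3): the paper simply asserts that $\big\{\{\Psi,X\},\omega\big\}\big|_{\C{L}}$ vanishes ``from the projectibility of $X$'', whereas you make explicit the degree count showing that only the $S^{1}$-component of $S$ survives, yielding the clean commutator identity $[\hat{Q},\iota_{X}]=\pm\iota_{QX}$. That refinement is correct and useful---it explains \emph{why} the bare condition $QX=(S^{1},X)=0$ is exactly what is needed---but it is an elaboration of the paper's argument rather than a different one. Your additional remarks in part~(1), that the bracket is strictly Poisson only after passing to $\hat{Q}$-cohomology, are also accurate and match what the paper says in the surrounding discussion rather than in the proof itself.
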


\begin{proof}Let $\omega$,$\tau\in C^\infty(\C{L})$ be $\hat Q$-cocycles; $\hat{Q}\omega = 0 = \hat{Q}\tau$.\hfill
\begin{enumerate}
\item  The first claim is obvious by an application of the Jacobi identity:
    \begin{align*}
    \tilde{Q}[\omega,\tau] & = \left.\Big\{\Psi, \big\{\{\Psi,\omega\},\tau\big\}|_{\C{L}}\Big\}\right|_{\C{L}}\\
    & = \left.\Big\{\big\{\Psi,\{\Psi,\omega\}\big\},\tau\Big\}\right|_{\C{L}} - (-1)^\omega \left.\Big\{\{\Psi,\omega\},\{\Psi,\tau\}\Big\}\right|_{\C{L}}\end{align*}
The first term vanishes since $\{\Psi,\Psi\} = 0$. The second term requires both the projectibility of $\omega$ and $\tau$ to vanish when restricted to the Lagrangian submanifold $\C{L}$.

\item The de Rham differential is given by the vector field $d\varphi^A\F{\p}{\p\varphi^A}$ on $\Pi T\C{M}$. Pulling back to $\C{L}$, the vector field $d$ has the local expression
    \[d = (-1)^Ad\phi^A\F{\p}{\p\phi^A},\]
and is Hamiltonian since $d = \{\Delta,-\}|_\C{L}$. Then
    \begin{align*}
    d\left(\hat{Q}\omega\right) & = \left.\big\{\Delta,\{\Psi,\omega\}|_{\C{L}}\big\}\right|_{\C{L}}\\
    & = \left.\big\{\{\Delta,\Psi\},\omega\big\}\right|_{\C{L}} - \left.\big\{\Psi,\{\Delta,\omega\}\big\}\right|_{\C{L}}\\
    & = -\hat{Q}\left(d\omega\right),\end{align*}
since $\{\Delta,\Delta\}=0$. So if a form is projectible, its exterior differential is also. (Notice that the commutator $[\hat Q,d] = \hat Qd + d\hat Q = 0$.)

\item Let $X$ be a projectible vector field, written $X^A(\phi)\phi^*_A$ as a function on $C^\infty(\Pi T\C{N})$. The interior product with $X$ is the vector field
    \[\imath_X = (-1)^XX^A(\phi)\F{\p}{\p d\phi^A}.\]
which may be expressed in terms of the Poisson bracket:
    \[(-1)^X\{X,-\}|_{\C{L}} = (-1)^XX^A(\phi)\F{\p}{\p d\phi^A} = \imath_X.\]
Consider
    \begin{align*}
    \hat Q\left(\imath_X\omega\right) & = (-1)^X\left.\big\{\Psi,\{X,\omega\}|_{\C{L}}\big\}\right|_{\C{L}}\\
    & = (-1)^X\left.\big\{\{\Psi,X\},\omega\big\}\right|_{\C{L}} + \left.\big\{X,\{\Psi,\omega\}\big\}\right|_{\C{L}}.\end{align*}
It can be seen that both terms vanish, the first from the projectibility of $X$, and the second from the projectibility of $\omega$. So $\imath_X\omega$ defines a projectible form so long as $X$ remains a projectible vector field.
\end{enumerate}
\end{proof}

\subsection{One Parameter Subgroups of Automorphisms}
Now we turn to one parameter subgroups of automorphisms of projectible differential forms on $M$. The interest here is physical. In line with Remark \ref{dynamics remark}, a projectible vector field provides the related gauge system with dynamics, and the one parameter subgroup generated by the flow gives the evolution of measurable observables through time. As such, given a projectible vector field, we can describe the evolution of differential forms over the flow of $V$, and ask about those forms which are constant.

As in \cite{LS05}, introduce a vector field $V$ on the space of leaves $N$ which can be seen to provide dynamics. Such a vector field is a projectible vector field on $M$ and so its appropriate extension to $\C{M}$ is a $Q$-cocycle, but further it is required to satisfy its own master equation
    \begin{equation}\label{master equation2}(S,V)=0.\end{equation}
This condition defines $V$ as a weak Poisson vector field on $M$ equipped with the weak Poisson bracket.
The vector field is a function $V\in C^\infty(\C{N})$ and is graded as
    \[\gh(V) = +1, \qquad \epsilon(V) = +1,\qquad \Deg(V)>0.\]
Locally, up to resolution degree $1$, $V$ has the expression
    \[V = V^ix^*_i + \left(c^\beta W^\alpha_\beta + G^{\alpha i}x^*_i\right)c^*_\alpha + \eta^a\left(M^b_a\eta^*_b + L^{ij}_ax^*_jx^*_i + c^\alpha N^{i}_{\alpha a}x^*_i\right) + \cdots.\]
That $V$ exists is proved in the same way as for $S$. The higher terms are found by recursive solutions to \eqref{master equation2} in fixed resolution degrees.

For a projectible function (physical observable) $F\in H^0_0(Q)$, the Lie derivative of $F$ along $V$ was given by \eqref{time evolution}:
    \[\dot{F} = \left.(V,F)\right|_{\phi^*=0}.\]
The Lie derivative of a differential form along a projectible vector field may also be defined. In a similar way to $S$, $V$ must be lifted to an even function on the odd tangent bundle, denoted by $\Gamma\in C^\infty(\Pi T\C{N})$, where
    \[\Gamma = \{\Delta, V\}.\]
We consider the analogous relation to \eqref{master equation2}:
    \begin{align}
    \{\Psi,\Gamma\} & = \big\{\{\Delta,S\},\{\Delta,V\}\big\}\nonumber\\
    & = \big\{\{\Psi,\Delta\},V\big\} - \big\{\Delta,\{\{\Delta,S\},V\}\big\}.\label{Gamma comm}\end{align}
The first term vanishes from the Poisson-nilpotency of $\Delta$, and since $V$ satisfies the master equation, the second term vanishes also. So $\{\Psi,\Gamma\}=0$ if and only if the master equation \eqref{master equation2} is satisfied.

A projectible vector field may be lifted to a $\Psi$-commuting function and we may define the Lie derivative of a projectible differential form along $V$ as
    \begin{equation}\label{evolution forms}
    \dot{\omega} = \{\Gamma,\omega\}|_{\C{L}}.
    \end{equation}
That $\dot\omega$ remains projectible can be seen by an application of the Jacobi identity; it will remain projectible so long as $V$ remains projectible. If the form is constant over the flow generated by $V$, then $\dot\omega = 0$ at the level of cohomology. Therefore the form $\dot\omega$ is $\hat Q$-exact, and so is a $\hat Q$-coboundary of some (not necessarily projectible) form $\tau$ of ghost degree one less than $\omega$. In terms of the original manifold $M$, this reproduces the condition
    \[\C{L}_V\omega|_\Sigma = 0: \qquad \C{L}_V\omega \propto T + dT\]
in the correct degrees. For example, let $\omega = d\phi^A\omega_A(\phi)$ be a $1$-form of ghost degree $-1$. An appropriate form $\tau$ of ghost degree $-2$ has expression $\tau = dx^i\eta^a\tau_{ai} + d\eta^a\tau_a +\cdots$. Performing the calculations we come to
    \[V^i\p_i\omega_jdx^j + dx^j\p_jV^i\omega_i = T^adx^i\tau_{ia} + dx^i\p_iT^a\tau_a,\]
the expression above but in local coordinates.

The natural operations on forms have the following degrees:
    \[d:H^{k}_{l}(\hat{Q})\rightarrow H^{k+1}_{l-1}(\hat{Q})\,,\qquad \imath_V:H^{k}_{l}(\hat{Q})\rightarrow H^{k-\Deg V}_{l+n}(\hat{Q}),\]
for $V$ a projectible homogeneous $\Deg(V)$ multivector field with $\gh(V) = +n$. Notice that if $V$ has degrees $\Deg(V) = +1$, $\gh(V)=+1$ ($V$ is a ghost $+1$ vector field), then
    \[\C{L}_V:H^{k}_{l}(\hat{Q})\rightarrow H^{k}_{l}(\hat{Q}).\]
The Lie derivative along a vector field $V$ is given by the well-known formula $\C{L}_V = d\circ \imath_V + (-1)^V\imath_V\circ d$. Since both $d$ and $\imath_V$ can be expressed in terms of the Poisson bracket, the Lie derivative can also, and by the Jacobi identity is equal to the expression \eqref{evolution forms}:
    \begin{equation}\label{Lie derivative}
    \C{L}_V\omega = (-1)^V\left.\big\{\{\Delta,V\},\omega\big\}\right|_{\C{L}} = (-1)^V\dot\omega.
    \end{equation}
This expression provides the following simply observation: if $\dot\omega = 0$ then $\dot{d\omega} = 0$ also.
\begin{proposition}
Let $V$ be a projectible vector field and $\Gamma$ its lift by $\Delta$. Suppose that $\omega$ is a projectible differential form and $X$ is a second projectible vector field. Then
    \[\dot{(\imath_X\omega)} = \imath_X(\dot\omega) + (-1)^X\left.\big\{(V,X),\omega\big\}\right|_{\C{L}},\]
where $(V,X)$ is the bracket \eqref{Poisson bracket} of vector fields $X$ and $V$. In terms of the Lie derivative,
    \[\C{L}_V\imath_X\omega = \imath_X\C{L}_V\omega + (-1)^{X+V}\left.\big\{(V,X),\omega\big\}\right|_{\C{L}}.\]
\end{proposition}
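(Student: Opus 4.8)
The plan is to derive both formulas from the graded Jacobi identity for the even bracket $\{-,-\}$ on $\Pi T\C{N}$, together with the three auxiliary facts already established: $\{\Delta,\Delta\}=0$, $\{\Psi,\Gamma\}=0$ (equivalent to the master equation \eqref{master equation2}, proved in \eqref{Gamma comm}), and the expression of $\imath_X$, $d$ as Hamiltonian vector fields, $\imath_X = (-1)^X\{X,-\}|_{\C{L}}$ and $d = \{\Delta,-\}|_{\C{L}}$, from the proof of Proposition \ref{prop2}. The key technical point is that nested restrictions to $\C{L}$ may be dropped or inserted freely when the intermediate expression has vanishing momenta, because $\C{L}$ is a coisotropic (in fact Lagrangian) submanifold and the functions involved ($\Psi$, $\Gamma$, $\Delta$, $X$, $\omega$) have the requisite form; this is exactly the manipulation used repeatedly in the proof of Proposition \ref{prop2}, so I will invoke it without further comment.

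First I would compute $\dot{(\imath_X\omega)} = \{\Gamma,\imath_X\omega\}|_{\C{L}}$. Writing $\imath_X\omega = (-1)^X\{X,\omega\}|_{\C{L}}$ and applying the Jacobi identity to $\{\Gamma,\{X,\omega\}\}$, I get two terms: $\{\{\Gamma,X\},\omega\}$ and $\pm\{X,\{\Gamma,\omega\}\}$. The second term, up to sign and the factor $(-1)^X$, is $\imath_X(\{\Gamma,\omega\}|_{\C{L}}) = \imath_X(\dot\omega)$. For the first term I must identify $\{\Gamma,X\}|_{\C{L}}$. Here I use $\Gamma = \{\Delta,V\}$ and the Jacobi identity again: $\{\{\Delta,V\},X\} = \{\Delta,\{V,X\}\} \pm \{V,\{\Delta,X\}\}$. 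Restricted to $\C{L}$, the term $\{\Delta,\{V,X\}\}$ contributes $d$ applied to the (momentum‑degree one) function $(V,X) = \{V,X\}|_{\C{N}}$, i.e. it is the de Rham differential of the vector field $(V,X)$ viewed as a function on $\Pi T\C{N}$, while $\{V,\{\Delta,X\}\}|_{\C{L}} = \pm\{V,\imath_X 1\}|_{\C{L}}$ — a careful bookkeeping of which pieces survive restriction will show the net contribution of $\{\{\Gamma,X\},\omega\}|_{\C{L}}$ to be exactly $(-1)^X\{(V,X),\omega\}|_{\C{L}}$ after absorbing $d$ into the bracket with $\omega$ via the Hamiltonian description of $d$ and a further Jacobi step. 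Collecting, this yields the first displayed identity.

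The second identity then follows essentially for free from the first together with \eqref{Lie derivative}: since $\C{L}_V = (-1)^V\dot{(\,\cdot\,)}$ on $\hat Q$-cohomology (and in fact on the level of the chain complex, by the Cartan‑type formula $\C{L}_V = d\imath_V + (-1)^V\imath_V d$ rewritten via Poisson brackets), I apply $(-1)^V$ to the first identity; the term $(-1)^X\{(V,X),\omega\}|_{\C{L}}$ picks up the sign $(-1)^V$ to become $(-1)^{X+V}\{(V,X),\omega\}|_{\C{L}}$, while $\imath_X(\dot\omega)$ becomes $\imath_X\C{L}_V\omega$ — one must check that $\imath_X$ commutes with the scalar $(-1)^V$, which it does since $V$ has fixed parity, and that the parity factors from moving $(-1)^V$ past $\imath_X$ are accounted for in the statement as written.

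The main obstacle I anticipate is the sign bookkeeping in the step identifying $\{\{\Gamma,X\},\omega\}|_{\C{L}}$ with $(-1)^X\{(V,X),\omega\}|_{\C{L}}$: this requires tracking the Grassmann parities of $\Delta$ (odd), $V$ (odd), $X$ (odd), $\Psi$ (odd), and $\omega$ through two applications of the graded Jacobi identity, and correctly handling the fact that $(V,X)$ is the \emph{odd} bracket of two \emph{odd} objects (hence even), while recognising which terms are killed by $\{\Delta,\Delta\}=0$ and which by the restriction to $\C{L}$. Everything else is a mechanical consequence of the Jacobi identity and the Hamiltonian descriptions of $d$, $\imath_X$, $\hat Q$, and the derived bracket \eqref{evolution forms} already in hand.
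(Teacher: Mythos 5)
Your overall skeleton is the paper's: a single application of the graded Jacobi identity to $\{\Gamma,\{X,\omega\}\}|_{\C{L}}$ splits it into $\{\{\Gamma,X\},\omega\}|_{\C{L}}$ plus a term that is $\imath_X(\dot\omega)$ up to the factor $(-1)^X$, and the Lie-derivative form then follows by multiplying through by $(-1)^V$ via \eqref{Lie derivative}. The flaw is in how you propose to identify the first term. The correct observation is immediate: $\{\Gamma,X\}=\{\{\Delta,V\},X\}$ is, by the very definition of the derived bracket given just before \eqref{Omega comm}, namely $(-,-)=\{\{\Delta,-\},-\}|_{\C{N}}$, exactly the odd bracket $(V,X)$ of \eqref{Poisson bracket}. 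No second Jacobi expansion is needed, and the one you describe goes wrong: since $V=V^A\phi^*_A$ and $X=X^A\phi^*_A$ are independent of the velocities $d\phi^A$, $d\phi^*_A$, and every term of the even bracket $\{-,-\}$ differentiates at least one argument with respect to a velocity, the even bracket $\{V,X\}$ vanishes identically. Hence the term $\{\Delta,\{V,X\}\}$, to which you attribute the entire content (``$d$ applied to $(V,X)$''), is zero, and the identification $(V,X)=\{V,X\}|_{\C{N}}$ is not correct --- $(V,X)$ is the derived bracket $\{\{\Delta,V\},X\}|_{\C{N}}$, not the plain even bracket. The whole contribution actually sits in the other term $\pm\{V,\{\Delta,X\}\}$ (which equals $\mp(X,V)=\pm(V,X)$ by the symmetry of the derived bracket), and there is no exterior differential to ``absorb into the bracket with $\omega$''; likewise $\{\Delta,X\}$ is the lift of $X$ (the analogue of $\Gamma$), not ``$\imath_X 1$'', which is in any case zero. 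Followed literally, your bookkeeping for this step would return $0$ for the term that should produce $(-1)^X\{(V,X),\omega\}|_{\C{L}}$. The repair is simply to delete the second Jacobi step and quote the definition of the derived odd bracket.
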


\begin{proof}
By Proposition \ref{prop2}, $\imath_X\omega$ is projectible. Now the statement follows from another application of the Jacobi identity and \eqref{Lie derivative}:
    \begin{align*}
    \dot{(\imath_X\omega)} & = (-1)^X\left.\big\{\Gamma,\{X,\omega\}\big\}\right|_{\C{L}}\\
     & = (-1)^X\left.\big\{\{\Gamma,X\},\omega\big\}\right|_{\C{L}} + (-1)^X\left.\big\{X,\{\Gamma,\omega\}\big\}\right|_{\C{L}}\\
     & = (-1)^X\left.\big\{(V,X),\omega\big\}\right|_{\C{L}} + \imath_X(\dot{\omega}).\end{align*}
\end{proof}
The interior product with $X$ commutes with the Lie derivative over $V$ up to the commutator of the two vector fields.

\begin{corollary}
Suppose that $\C{L}_V\omega = 0$ (in $\hat Q$-cohomology). Then $\C{L}_V\imath_X\omega = 0$ if vector fields $X$ and $V$ commute. In particular, this occurs if $\llbracket V,X\rrbracket\in\ff{I}$ (or $\llbracket V,X\rrbracket \sim 0$ in equivalence relation \eqref{equivalence of multivectors}).
\end{corollary}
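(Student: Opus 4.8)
The plan is to deduce the Corollary from the preceding Proposition by controlling its two right-hand terms in $\hat Q$-cohomology. That Proposition gives $\C L_V\imath_X\omega = \imath_X\C L_V\omega + (-1)^{X+V}\{(V,X),\omega\}|_{\C L}$, where $(V,X)=\llbracket V,X\rrbracket$ is the commutator of the two projectible vector fields; since $\imath_X\omega$ is projectible by Proposition \ref{prop2}(3) and $\C L_V$ preserves $\hat Q$-cohomology (being the derived bracket with the $\Psi$-commuting function $\Gamma$), the left-hand side has a well-defined cohomology class, and it will suffice to show that each summand on the right is a $\hat Q$-coboundary under the stated hypotheses.

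First I would record that, for a \emph{projectible} vector field $X$, the interior product $\imath_X$ anticommutes with $\hat Q$ exactly, hence is a chain map and descends to $\hat Q$-cohomology. This sharpens the computation in the proof of Proposition \ref{prop2}(3): the obstruction to anticommutativity, evaluated on a form $\alpha$, is $(-1)^X\{\{\Psi,X\},\alpha\}|_{\C L}$, and inspecting $\{\Psi,X\}=\{\{\Delta,S\},X\}$ one sees that it contains no $d\phi$ or $d\phi^*$, that its $\partial_{d\phi^*}$-part vanishes, and that its $\partial_{\phi^*}$-part restricted to $\C L$ reproduces the components of $\llbracket Q,X\rrbracket$ --- which are zero precisely because $X$ is a $Q$-cocycle. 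So this obstruction vanishes for \emph{every} form $\alpha$, not only for cocycles. Writing the hypothesis $\C L_V\omega=0$ in cohomology as $\C L_V\omega=\hat Q\tau$, one then gets $\imath_X\C L_V\omega=\pm\hat Q(\imath_X\tau)$, a coboundary.

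For the second summand I would use that $(V,X)$ is again a projectible vector field, the projectible multivectors being closed under $\llbracket-,-\rrbracket$. If $X$ and $V$ commute on the nose, $(V,X)=0$ and this term is identically zero. If instead only $\llbracket V,X\rrbracket\in\ff I$, i.e.\ $(V,X)\sim 0$, then by the last assertion of Proposition \ref{cocycle projectible condition} the extension of $(V,X)$ to $\C M$ is a $Q$-coboundary, $(V,X)=QW$; substituting, and using a Cartan-type identity relating $\{QW,\omega\}|_{\C L}$ to $\hat Q(\imath_W\omega)$ and $\imath_W(\hat Q\omega)$ together with $\hat Q\omega=0$, shows that $\{(V,X),\omega\}|_{\C L}$ is a $\hat Q$-coboundary. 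In either case the second summand dies in cohomology, and with the previous paragraph this yields $\C L_V\imath_X\omega=0$ in $\hat Q$-cohomology; the parenthetical case $\llbracket V,X\rrbracket\sim 0$ is literally the same, since $\sim 0$ means membership in $\ff I$ by definition.

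The main obstacle will be this bridge between the ``$Q$-level'' input (projectibility of $X$ and of $(V,X)$, $Q$-exactness) and the ``$\hat Q$-level'' output (that $\imath_X$ descends, and that $\imath_{QW}$ sends cocycles to coboundaries). Both identities are instances of the same Jacobi-identity and derived-bracket manipulation used throughout Section 4 --- expand a nested bracket, then restrict to $\C L$ --- and the genuine work lies only in the graded/super sign bookkeeping and in checking that restriction to $\C L$ deletes exactly the momentum-dependent remainder, leaving an operator that is either zero (for $\imath_X$) or $\hat Q$-exact (for $\imath_{QW}$).
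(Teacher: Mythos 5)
The paper states this corollary without a separate proof, treating it as an immediate consequence of the preceding Proposition, and your argument is exactly that deduction: both right-hand terms of $\C{L}_V\imath_X\omega = \imath_X\C{L}_V\omega + (-1)^{X+V}\{(V,X),\omega\}|_{\C{L}}$ are shown to die in $\hat Q$-cohomology. Your supplementary observations --- that for a projectible $X=X^A(\phi)\phi^*_A$ the obstruction to $\imath_X$ being a chain map is $\pm\imath_{QX}=0$, and that when $\llbracket V,X\rrbracket\sim 0$ the term $\{QW,\omega\}|_{\C{L}}$ collapses to $\pm\hat Q(\imath_W\omega)$ on cocycles via the same derived-bracket Jacobi manipulation --- correctly fill in the cohomological bookkeeping the paper leaves implicit.
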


\begin{corollary}
Suppose that $\omega$ is a projectible $1$-form such that $\C{L}_V\omega = 0$ (in $\hat Q$-cohomology). Let $X$ be another projectible vector field such that $\llbracket V,X\rrbracket \in\ff{I}$. Then $F = \imath_X\omega$ is a projectible function invariant over the flow generated by $V$. It is a physical observable representing an integral of the motion defined by $V$.
\end{corollary}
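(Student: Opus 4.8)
The plan is to assemble the statement from the three ingredients already established: Proposition \ref{prop2}(3), the preceding Proposition, and the interpretation of $H^0_0(\hat Q)$ in ghost degree zero. First I would observe that since $\omega$ is projectible and $X$ is a projectible vector field, Proposition \ref{prop2}(3) immediately gives that $F = \imath_X\omega$ is a projectible differential form; and because $\omega$ is a $1$-form ($\Deg = 1$, $\gh = -1$) and $\imath_X$ lowers form degree by $\Deg(X) = 1$ and raises ghost degree by $+1$, the result $F$ has form degree $0$ and ghost degree $0$. A projectible object of form degree zero is just a function, so $F \in H^0_0(\hat Q)$, which by the identification recalled after Proposition \ref{cocycle projectible condition} is precisely the algebra of physical observables $C^\infty(N)$.

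Next I would invoke the Proposition immediately above to compute $\C{L}_V F$. We have
    \[
    \C{L}_V\imath_X\omega = \imath_X\C{L}_V\omega + (-1)^{X+V}\left.\big\{(V,X),\omega\big\}\right|_{\C{L}}.
    \]
The first term vanishes by hypothesis, since $\C{L}_V\omega = 0$ in $\hat Q$-cohomology and $\imath_X$ descends to cohomology (it commutes with $\hat Q$ up to the projectibility terms, as shown in Proposition \ref{prop2}(3)). For the second term, the hypothesis $\llbracket V,X\rrbracket \in \ff{I}$ means $(V,X)$ is a trivial multivector — a combination of the $T^a$ and the $R_\alpha$ — so it is a $\hat Q$-coboundary (by Proposition \ref{cocycle projectible condition}, trivial multivectors map to coboundaries). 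Hence $\{(V,X),\omega\}|_{\C{L}}$ is itself $\hat Q$-exact: writing $(V,X) = \hat Q Y$ for an appropriate lift $Y$, the Jacobi identity gives $\{(V,X),\omega\}|_{\C{L}} = \pm\hat Q\{Y,\omega\}|_{\C{L}} \mp \{Y,\hat Q\omega\}|_{\C{L}}$, and the last term vanishes since $\hat Q\omega = 0$. Therefore $\C{L}_V F = 0$ in $\hat Q$-cohomology, which via \eqref{Lie derivative} is equivalent to $\dot F = 0$: the function $F$ is constant along the flow generated by $V$, i.e. an integral of the motion.

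Finally I would close with the physical reading: by Remark \ref{dynamics remark}, the projectible vector field $V$ supplies the dynamics on the reduced phase space $N$, the one-parameter group \eqref{time evolution} is time evolution, and a projectible function annihilated by $\C{L}_V$ in cohomology is exactly a physical observable whose value does not change under this evolution — an integral of the motion. This is the form-valued analogue of the familiar statement that the Poisson bracket of two constants of motion is again a constant of motion.

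The only genuine point requiring care — and the step I expect to be the main obstacle — is the claim that $\{(V,X),\omega\}|_{\C{L}}$ is $\hat Q$-exact rather than merely $\hat Q$-closed. One must check that the lift $Y$ of the trivial multivector $(V,X)$ can be chosen compatibly with the restriction to $\C{L}$, so that the derived-bracket manipulation $\{\hat Q Y,\omega\} = \hat Q\{Y,\omega\} \pm \{Y,\hat Q\omega\}$ is valid after restriction; this is the same subtlety that appears in the proof of Proposition \ref{prop2} (the restriction to $\C{L}$ only commutes with the brackets because the relevant projectibility terms vanish on $\C{L}$), and it is handled identically here.
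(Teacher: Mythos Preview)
Your proposal is correct and follows the same route the paper implies: the paper states this result as an immediate corollary of the preceding Corollary (that $\C{L}_V\imath_X\omega = 0$ whenever $\llbracket V,X\rrbracket\in\ff{I}$) together with Proposition~\ref{prop2}(3), without giving any separate argument. Your write-up simply makes explicit the degree count and the cohomological reason why the $(V,X)$-term drops out, which is exactly the content the paper leaves to the reader; the only minor slip is citing Proposition~\ref{cocycle projectible condition} for a $\hat Q$-coboundary when that proposition concerns $Q$, but you correctly identify and resolve this lift issue in your final paragraph.
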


\begin{proposition}
Let $\omega$ and $\tau$ be projectible forms such that for a projectible vector field $V$, $\C{L}_V\omega = 0$ and $\C{L}_V\tau = 0$ (in $\hat Q$-cohomology). Then
    \[\C{L}_V[\omega,\tau] = 0,\]
at the level of cohomology.
\end{proposition}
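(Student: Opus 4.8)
The plan is to exhibit $\C{L}_V[\omega,\tau]$ as a $\hat{Q}$-coboundary, so that it vanishes in $\hat{Q}$-cohomology. By Proposition~\ref{prop2}(1) the form $[\omega,\tau]$ is again projectible, hence a $\hat{Q}$-cocycle, so its class is well defined; moreover $\C{L}_V$ descends to $\hat{Q}$-cohomology, since $\{\Psi,\Gamma\}=0$ gives $\hat{Q}\dot\omega=\pm\dot{(\hat{Q}\omega)}$ by the Jacobi identity (as already used around \eqref{evolution forms}), so the assertion makes sense. By hypothesis there are forms $\sigma,\rho$ with $\C{L}_V\omega=\hat{Q}\sigma$ and $\C{L}_V\tau=\hat{Q}\rho$.

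The engine is the Jacobi identity for the even non-degenerate bracket $\{-,-\}$ on $\Pi T\C{N}$, applied exactly as in \eqref{Gamma comm} and in the proof of Proposition~\ref{prop2}(1). Writing $\C{L}_V(-)=(-1)^V\{\Gamma,-\}|_{\C{L}}$ from \eqref{Lie derivative} and $[\omega,\tau]=\{\{\Psi,\omega\},\tau\}|_{\C{L}}$ from \eqref{odd koszul-type brackets}, one carries $\Gamma$ through the two outer brackets; the term $\{\{\{\Gamma,\Psi\},\omega\},\tau\}$ produced in the process drops out because $\{\Gamma,\Psi\}=\pm\{\Psi,\Gamma\}=0$, and on restricting to $\C{L}$ the surviving terms reassemble, up to Koszul signs, into $[\C{L}_V\omega,\tau]$, $[\omega,\C{L}_V\tau]$ and a term $\hat{Q}\big(\{\{\Gamma,\omega\},\tau\}|_{\C{L}}\big)$ which is $\hat{Q}$-exact. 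Thus $\C{L}_V$ is a graded derivation of the odd binary bracket modulo $\hat{Q}$-coboundaries,
\begin{equation*}
\C{L}_V[\omega,\tau]\ \equiv\ [\C{L}_V\omega,\tau]+(-1)^{V(\omega+1)}[\omega,\C{L}_V\tau]\pmod{\im\,\hat{Q}}.
\end{equation*}
Substituting $\C{L}_V\omega=\hat{Q}\sigma$ and $\C{L}_V\tau=\hat{Q}\rho$ and using that $\hat{Q}$ is itself a graded derivation of $[-,-]$ --- by the same mechanism as in the proof of Proposition~\ref{prop2}(1), valid since $\{\Psi,\Psi\}=0$ --- together with $\hat{Q}\omega=\hat{Q}\tau=0$, one has $[\hat{Q}\sigma,\tau]=\hat{Q}[\sigma,\tau]$ and $[\omega,\hat{Q}\rho]=\pm\hat{Q}[\omega,\rho]$, whence
\begin{equation*}
\C{L}_V[\omega,\tau]\ =\ \hat{Q}\big([\sigma,\tau]\pm[\omega,\rho]+\cdots\big),
\end{equation*}
a $\hat{Q}$-coboundary, as required.

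The only genuinely delicate point is the bookkeeping around the restriction $|_{\C{L}}$: to pull the inner brackets $\{\Gamma,\omega\}$ and $\{\Psi,\omega\}$ through $|_{\C{L}}$ and re-read them, inside a further derived bracket, as $\C{L}_V\omega$ respectively $\hat{Q}\omega$, one relies on $\C{L}\subset\Pi T\C{N}$ being Lagrangian --- hence coisotropic --- and on the Hamiltonian vector fields of $\Psi$ and of $\Gamma$ being tangent to it, which is exactly what makes $\hat{Q}=\{\Psi,-\}|_{\C{L}}$ and $\C{L}_V=\pm\{\Gamma,-\}|_{\C{L}}$ well-defined operators on $C^\infty(\C{L})=\Omega(\C{M})$ in the first place. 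The price is that the derivation rule for $\C{L}_V$ holds only up to $\hat{Q}$-exact corrections, but that is precisely the ambiguity we are entitled to discard at the level of cohomology. Granting this, and tracking the Koszul signs carefully, the proof is the purely algebraic argument above, resting on (and parallel to) Proposition~\ref{prop2}(1).
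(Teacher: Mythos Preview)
Your argument is correct and follows the same route as the paper: expand $\C{L}_V[\omega,\tau]$ via the Jacobi identity for $\{-,-\}$, kill the $\{\Gamma,\Psi\}$ term using $(S,V)=0$, and dispose of the remaining two terms using the hypotheses on $\omega$ and $\tau$. The paper's version is terser---it simply asserts that the surviving terms vanish ``from the invariance of the two forms over $V$ and the projectibility conditions''---whereas you spell out more carefully that the hypotheses are only cohomological and hence the output is $\hat Q$-exact rather than literally zero; this extra bookkeeping (and your remark on the restriction $|_{\C L}$) is a refinement, not a different strategy.
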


\begin{proof}
Further applications of the Jacobi identity:
    \[(-1)^V\C{L}_V[\omega,\tau] = \left.\Big\{\Gamma,\big\{\{\Psi,\omega\},\tau\big\}|_{\C{L}}\Big\}\right|_{\C{L}}\]
    \[= \left.\Big\{\big\{\{\Gamma,\Psi\},\omega\big\},\tau\Big\}\right|_{\C{L}} + \left.\Big\{\big\{\Psi,\{\Gamma,\omega\}\big\},\tau\Big\}\right|_{\C{L}} + \left.\big\{\{\Psi,\omega\},\{\Gamma,\tau\}\big\}\right|_{\C{L}}.\]
The first term vanishes from equation \eqref{master equation2}, the second two from the invariance of the two forms over $V$ and the projectibility conditions.
\end{proof}

\begin{corollary}
For a projectible $1$-form $\omega$ and function $F$ such that $\dot\omega = 0$ and $\dot F = 0$ in cohomology, the function $[\omega,F]$ is projectible and $\C{L}_V[\omega,F] = 0$. That is, the Koszul bracket of a $V$-invariant form with a $V$-invariant function produces again a $V$-invariant function - a physical observable which is an integral of the motion prescribed by $V$.
\end{corollary}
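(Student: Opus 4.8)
The plan is to deduce the statement from the preceding Proposition by specialising its second argument to $\tau = F$. A smooth function $F$ on $\C{M}$ is precisely a differential form of form degree zero, so the two hypotheses ``$F$ is projectible'' and ``$\dot F = 0$ in $\hat Q$-cohomology'' say exactly that the extension of $F$ to $\C{L}$ is a $\hat Q$-cocycle and that $\C{L}_V F = 0$ in cohomology (using \eqref{Lie derivative} to pass between $\dot F$ and $\C{L}_V F$); these are precisely the conditions needed to feed $F$ into the preceding Proposition. Before doing so I would record the grading bookkeeping: the odd binary bracket \eqref{odd koszul-type brackets} of a $p$-form with a $q$-form is a form of degree $p+q-1$, since it reduces over the base to the ordinary Koszul bracket, so $[\omega,F]$ has form degree $1+0-1=0$ and is genuinely a function on $\C{M}$; tracking ghost degrees, $\gh([\omega,F]) = \gh(\omega)+1$, which is $0$ in the case $\gh(\omega) = -1$ relevant to the $V$-invariance discussion above.

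With this in hand the argument is immediate. By Proposition~\ref{prop2}(1) the projectible forms are closed under the odd binary bracket, so $[\omega,F]$ is a projectible form, and being of form degree $0$ it is a projectible function -- equivalently, it descends to the algebra of physical observables $C^\infty(N)$ (cf.\ Proposition~\ref{cocycle projectible condition}). Applying the preceding Proposition with $\tau := F$ gives $\C{L}_V[\omega,F] = 0$ in $\hat Q$-cohomology: expanding $\{\Gamma,\{\{\Psi,\omega\},F\}|_{\C{L}}\}|_{\C{L}}$ by the Jacobi identity yields three terms, the first of which vanishes by the master equation \eqref{master equation2} in the form $\{\Gamma,\Psi\} = \{\{\Delta,V\},\{\Delta,S\}\} = 0$ (cf.\ \eqref{Gamma comm}), while the other two vanish because $\{\Gamma,\omega\}|_{\C{L}}$ and $\{\Gamma,F\}|_{\C{L}}$ are $\hat Q$-exact and $\omega$, $F$ are cocycles. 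Translating once more via \eqref{Lie derivative}, $\C{L}_V[\omega,F]=0$ in cohomology is the same as $\dot{[\omega,F]}=0$ in cohomology. This proves both assertions, and the physical reading -- $[\omega,F]$ is an integral of the motion generated by $V$ -- then parallels the earlier corollary that $\imath_X\omega$ is such an integral when $\llbracket V,X\rrbracket\in\ff{I}$.

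I do not expect a real obstacle, since everything of substance has already been established and the statement merely repackages the preceding Proposition together with Proposition~\ref{prop2}. The two points deserving care are the degree accounting of the first paragraph -- one must be sure that $[\omega,F]$ genuinely lands in form degree $0$, so that calling it a ``function'' and a ``physical observable'' is justified, and one must track its ghost degree -- and the sign conventions relating the two notations $\dot{}$ and $\C{L}_V$: in particular that for a $0$-form one has $\C{L}_V F = d\,\imath_V F + (-1)^V\imath_V\,dF = \imath_V\,dF = V(F)$, which agrees with $\dot F = (V,F)|_{\phi^*=0}$ of \eqref{time evolution}. One should also keep in mind that all the equalities ``$=0$'' hold only at the level of $\hat Q$-cohomology and not at the cochain level -- $[\omega,F]$ need not be strictly $V$-invariant, only its $\hat Q$-class -- which is exactly why the statement, like the preceding Proposition, is phrased in cohomology.
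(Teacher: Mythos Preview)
Your proposal is correct and takes essentially the same approach as the paper: the paper states this corollary without proof, as an immediate specialisation of the preceding Proposition to $\tau=F$ together with Proposition~\ref{prop2}(1), which is exactly what you do. Your additional bookkeeping on form and ghost degree (showing $[\omega,F]$ really lands in form degree~$0$) and your explicit re-derivation of the Jacobi-identity step are more detail than the paper supplies, but the underlying argument is the same.
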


\section{Examples of Weak Poisson Systems}

\subsection{Invertible Upper Triangular Matrices}
Consider the Lie group $T_n$ of invertible $n\times n$ upper triangular matrices and its Lie algebra $\ff{t}$, the space of all upper triangular matrices. Let $\ff{h} = [\ff{t},\ff{t}]$ be the first derived subalgebra of $\ff{t}$, the Lie algebra of strictly upper triangular matrices.

A gauge system can be defined on $T_n$ by specifying no constraint equations and setting the gauge generators to be the left invariant vector fields in $\ff{h}$. Take $V\in\ff{t}$ such that $V$ can be written as a sum of a non-zero diagonal matrix and a strictly upper triangular matrix. That we require a non-zero diagonal matrix is the same as asking that $V$ is not purely a gauge generator and contains non-trivial components. As an example, take $V = (0 \cdots 0\, v)$ where $v$ is a column vector $v = (v_1, \ldots,v_n)^{T}$ and $v_n\neq 0$. Consider a bivector $P = P^{ij}e_j\wedge e_i$ where $e_i$ is the zero matrix with a $1$ in the $(i,i)^{th}$ position and $P^{ij}=-P^{ji}$ is an antisymmetric constant matrix. Trivially, both of these are projectible:
    \[[V,R] \in \ff{h}, \qquad \llbracket P,R\rrbracket = P^{ij}[e_j,R]\wedge e_i - P^{ij}[e_i,R]\wedge e_j\in \ff{h}\wedge\ff{t},\qquad R\in\ff{h},\]
since every commutator belongs to the derived subalgebra $\ff{h}$.

It is also trivial that $P$ acts as a weak Poisson bivector
    \[\llbracket P,P\rrbracket = P^{ij}P^{mn}[e_i,e_m]\wedge e_j\wedge e_n + \cdots \in \ff{h}\wedge\ff{t}\wedge\ff{t},\]
and that $V$ is a weak Poisson vector field for $P$
    \[\llbracket V,P\rrbracket = 2P^{ij}[V,e_j]\wedge e_i \in \ff{h}\wedge\ff{t}.\]
Therefore we may define a weak Poisson structure on $T_n$. (This weak Poisson system contains a lot of freedom in the choice of Poisson bivector and the vector field $V$.) The ideal $\ff{h}$ defines a normal subgroup $H\subset T_n$ consisting of the upper triangular matrices with $1$'s on the diagonal  and we can consider the quotient $T_n/H$ equipped with the induced Poisson structure.

This example clearly admits a generalisation. Let $G$ be a Lie group with Lie algebra $\ff{g}$ such that $[\ff{g},\ff{g}]$ is a proper ideal of $\ff{g}$. Since $\ff{g}/[\ff{g},\ff{g}]\neq 0$, a non-trivial vector field and Poisson bivector $P$ may then be constructed in such a way that all the relations between them are trivially satisfied if we consider the gauge generators to be the vector fields in the derived subalgebra. Since the derived subalgebra is an ideal, it corresponds to a normal subgroup $H$ in $G$ and we can consider the quotient $X = G/H$ as a Poisson manifold with Poisson structure $P$. (In general $H$ may not always be a closed subgroup in the Lie group sense, therefore we assume that $H$ is a proper Lie subgroup of the Lie group $G$.) A one-parameter subgroup of Poisson automorphisms of $C^\infty(X)$ is generated by the projectible vector field.

Note that any ideal of $\ff{g}$ can be suitably chosen to be a set of gauge generators, however the choice of bivector becomes less trivial as the projectibility conditions are not automatically satisfied.

\subsection{The Heisenberg Group}
Consider the $3$-dimensional Heisenberg group $H_3(\mathbb{R})$:
    \[H_3(\mathbb{R}) = \left\{ \left.\left(\begin{array}{ccc}1&a&c\\0&1&b\\0&0&1\end{array}\right)\right|a,b,c\in\mathbb{R}\right\}.\]
Its Lie algebra $\ff{h}_3$ is generated by
    \[p = \left(\begin{array}{ccc}0&1&0\\0&0&0\\0&0&0\end{array}\right), \quad q = \left(\begin{array}{ccc}0&0&0\\0&0&1\\0&0&0\end{array}\right),\quad z = \left(\begin{array}{ccc}0&0&1\\0&0&0\\0&0&0\end{array}\right)\]
which satisfy the commutation relation
    \[[p,q]=z,\]
and all others are zero.

Identify $H_3(\mathbb{R})$ with $\mathbb{R}^3$ with local coordinates $(x,y,z)$. The generators of $\ff{h}_3$ correspond to the vector fields
    \[X = \p_x-\F{1}{2}y\p_z,\quad Y = \p_y+\F{1}{2}x\p_z, \quad Z = \p_z.\]
Consider $Z$ as a gauge generator on $\mathbb{R}^3$. Then a bivector $P$ such as
    \[P = X\wedge Y\]
is a weak Poisson bivector as it satisfies the weak Jacobi identity
    \[\llbracket P,P\rrbracket = 2X\wedge Y\wedge Z.\]
Let
    \[V = y\p_x - x\p_y, \qquad \theta = dz + \F{1}{2}\left(ydx - xdy\right)\]
be a vector field and $1$-form respectively.  Both are gauge invariant, and the contraction
    \[\theta(V) = \F{1}{2}\left(x^2+y^2\right)\]
is invariant with respect to the flow of $V$: $V(x^2+y^2) = 0$. We can see that $\mathbb{R}^2$ as the quotient of $\mathbb{R}^3$ by the gauge orbits is equipped with the Poisson structure $P = \p_x\wedge \p_y$ and a subgroup of Poisson automorphisms generated by $V$. This is the Poisson structure induced from the canonical symplectic structure on $\mathbb{R}^2$. Under the flow of $V$, functions of the form $f=f(x^2+y^2)$ are invariant which is clear since $V$ generates rotations about the origin.

Both of these examples contained no constraint equations, however these can be easily included by viewing each group as a subgroup of the corresponding general linear group and smoothly extending the vector fields into this ambient group.

\subsection{Jacobi Manifolds}
A Jacobi manifold $(M,P,R)$ is a manifold $M$ with a distinguished bivector field $P$ and vector field $R$ such that
    \[\llbracket P,P\rrbracket = 2P\wedge R,\qquad \llbracket P,R\rrbracket = 0.\]
On setting $R$ to be a single gauge generator, we see that $P$ together with $R$ defines a weak Poisson structure on the manifold $M$. Consider the case where $M$ is a contact manifold, so $\dim M = 2n+1$ and there exists a $1$-form $\theta$ such that $\theta\wedge(d\theta)^n$ does not vanish. Locally on $M$, there exist coordinates $(t,q^1,\ldots,q^n,p^1,\ldots,p^n)$ such that
    \[\theta = dt - \sum_{i=1}^np^idq^i.\]
Let
    \[R = \F{\p}{\p t},\qquad P = \sum_{i=1}^n\left(\F{\p}{\p q^i} + p^i\F{\p}{\p t}\right)\wedge\F{\p}{\p p^i}.\]
Then $P$ and $R$ as defined satisfy the conditions of a Jacobi manifold and so specify a weak Poisson structure on $M$. Further we are provided with a natural gauge invariant $1$-form $\theta$
    \[\C{L}_{\p_t}\theta = 0.\]
Any vector field of the form $V = V^i_q(q,p)\F{\p}{\p q^i} + V^j_p(q,p)\F{\p}{\p p^j} + V(q,p,t)\p_t$ will be gauge invariant, since $[V,\p_t] = \p_tV(q,p,t)\p_t$. Consider the vector field
    \[V = q^i\F{\p}{\p q^i} - p^i\F{\p}{\p p^i}.\]
This defines a weak Poisson vector field since $\llbracket V,P\rrbracket = 0$. Contracting with $\theta$ gives
    \[f = \theta(V) = -\sum_{i=1}^np^iq^i,\]
and $V(f) = 0$, since $V=\llbracket P,f\rrbracket$. So we have a one-parameter family of Poisson automorphisms on the quotient space with $f$ a constant function along the flow of $V$.

\section*{Acknowledgments}
It is a pleasure to acknowledge Th.~Voronov for many helpful
discussions.
M.P. appreciates the hospitality of Tomsk State
University where he has begun this work, and thanks D.~Kaparulin for
all the help he provided during this visit. The visit by M.P. to
Tomsk State University was supported by the RFBR grant 14-31-50799.
The work of S.L. and A.Sh. was partially supported by the RFBR grant
13-02-00551.

%% The Appendices part is started with the command \appendix;
%% appendix sections are then done as normal sections
%% \appendix

%% \section{}
%% \label{}

%% If you have bibdatabase file and want bibtex to generate the
%% bibitems, please use
%%

%\section*{References}

\bibliography{WKB}

\end{document}